\documentclass[preprint,12pt]{elsarticle}

\usepackage{amssymb}
\usepackage{amsmath}
\usepackage{amsthm}

\usepackage{lineno}

\journal{Theoretical Computer Science}

\usepackage[T1]{fontenc}
\usepackage{graphicx}
\usepackage[textsize=footnotesize,color=green!40]{todonotes}
\usepackage{xspace}
\usepackage{complexity}
\usepackage{subcaption}
\usepackage[textsize=footnotesize,color=green!40]{todonotes}
\usepackage{thmtools}
\usepackage{hyperref}   
\usepackage{cleveref}
\usepackage{csquotes}

\usepackage{tikz}
\usetikzlibrary{shapes, calc, graphs, fit, positioning, backgrounds, arrows.meta}
\usepackage[most]{tcolorbox}

\newcommand{\gsfull}{\textsc{Geodetic Set}\xspace}

\newcommand{\TDM}{\textsc{3-Dimensional Matching}\xspace}
\newcommand{\fen}{\textsf{fen}\xspace}
\newcommand{\fvn}{\textsf{fvn}\xspace}
\newcommand{\mgs}{\textsf{mgs}\xspace}
\newcommand{\yes}{\texttt{yes}\xspace}

\newcommand{\problemdec}[3]{
	\begin{tcolorbox}[width = \textwidth,colback=white,arc=0pt,outer arc=0pt,boxrule=0.7pt,left =0.5em,right=0em]#1		\\[2pt]
		\begin{tabular}{ @{}l p{0.84\textwidth} c }
			\textbf{Input:} & #2 \\[.5pt]
			\textbf{Task:} & #3
		\end{tabular}
	\vspace{-0.25em}
	\end{tcolorbox}
}

\newtheorem{observation}{Observation}{\bfseries}{\itshape}
\newtheorem{myclaim}{Claim}{\bfseries}{\itshape}
\newtheorem{proposition}{Proposition}{\bfseries}{\itshape}
\newtheorem{definition}{Definition}{\bfseries}{\itshape}
\newtheorem{lemma}{Lemma}{\bfseries}{\itshape}

\crefname{observation}{observation}{observations}
\Crefname{observation}{Observation}{Observations}
\crefname{myclaim}{claim}{claims}
\Crefname{myclaim}{Claim}{Claims}
\crefname{proposition}{proposition}{propositions}
\Crefname{proposition}{Proposition}{Propositions}
\crefname{definition}{definition}{definitions}
\Crefname{definition}{Definition}{Definitions}
\crefname{lemma}{lemma}{lemmas}
\Crefname{lemma}{Lemma}{Lemmas}

\title{Algorithms and Hardness for Geodetic Set on Tree-like Digraphs} 

\author[1]{Florent Foucaud\fnref{fn1}}
\author[2]{Narges Ghareghani\fnref{fn2}}
\author[1]{Lucas Lorieau\fnref{fn3}}  
\author[2]{Morteza Mohammad-Noori}
\author[2]{Rasa Parvini Oskuei}
\author[3]{Prafullkumar Tale\fnref{fn4}}

\fntext[fn1]{This author was supported by the French government IDEX-ISITE initiative 16-IDEX-0001 (CAP 20-25), the International Research Center "Innovation Transportation and Production Systems" of the I-SITE CAP 20-25, and by the ANR project GRALMECO (ANR-21-CE48-0004).}
\fntext[fn2]{The work of this author is based upon research funded by Iran National Science Fundation (INSF) under project No. 4032600.}
\fntext[fn3]{This author has received financial support from the CNRS through the MITI interdisciplinary programs and the IRL ReLaX.}
\fntext[fn4]{This work is supported by INSPIRE Faculty Fellowship offered by DST, GoI, India.}

\affiliation[1]{organization={Université Clermont Auvergne, CNRS, Clermont Auvergne INP, Mines Saint-Étienne, LIMOS},
            city={Clermont-Ferrand},
            postcode={63000}, 
            country={France}}
\affiliation[2]{organization={School of Mathematics, Statistics, and Computer Science, University of Tehran},
            city={Tehran},
            country={Iran}}
\affiliation[3]{organization={Indian Institute of Science Education and Research Pune},
            city={Pune},
            country={India}}

\begin{document}

\begin{frontmatter}
\begin{abstract}
In the \gsfull problem, an input is a (di)graph
\(G\) and integer \(k\), and the objective 
is to decide whether there exists a vertex subset \(S\) of size \(k\)
such that any vertex in \(V(G)\setminus S\)
lies on a shortest (directed) path between two vertices in \(S\).
The problem has been studied on undirected and directed
graphs from both algorithmic and graph-theoretical
perspectives. 

We focus on directed graphs
and prove that \textsc{Geodetic Set} admits a 
polynomial-time algorithm on ditrees, that is, digraphs 
\emph{with} possible \(2\)-cycles when the 
underlying undirected graph is a tree (after deleting possible parallel edges). 
This positive result naturally leads us to investigate 
cases where the underlying undirected graph is `close to 
a tree'.  

Towards this, 
we show that \textsc{Geodetic Set} on digraphs 
\emph{without} \(2\)-cycles and
whose underlying undirected graph has feedback edge set 
number \(\fen\), can be solved in time  
\(2^{\mathcal{O}(\fen)} \cdot n^{\mathcal{O}(1)}\),  
where \(n\) is the number of vertices.  
To complement this, we prove that the problem remains 
\NP-hard on DAGs (which do not contain \(2\)-cycles) 
even when the underlying undirected graph 
has constant feedback vertex set number and  constant pathwidth.  
Our last result significantly strengthens the result of
Ara\'ujo and Arraes~[Discrete Applied Mathematics, 2022] 
that the problem is \NP-hard on DAGs 
when the underlying undirected graph is either 
bipartite, cobipartite or split.
\end{abstract}

\begin{keyword}
Geodetic Set  \sep Directed Trees \sep \NP-hardness \sep Parameterized Complexity \sep Feedback Edge Set Number \sep Feedback Vertex Set Number
\end{keyword}

\end{frontmatter}

\section{Introduction}
Harary, Loukakos, and Tsuros~\cite{harary1993} introduced the concept of a 
\emph{geodetic set}, defined as a set \(S\) of vertices of an undirected graph \(G\) 
such that every vertex of \(G\) lies on some geodesic (shortest path) between two 
vertices of \(S\). Since then, the problem of computing a minimum geodetic set has been 
extensively studied, both from the structural and algorithmic perspectives. 
It has become a central topic in \emph{geodesic convexity} in graphs~\cite{farber1986,bookGC}, 
and has found applications in diverse settings. We refer the reader to~\cite{ekim2012} 
and the references therein for a representative list of applications. 
For example, computing a minimum-size geodetic set can be seen as a network design problem, 
where one seeks to determine optimal locations of public transportation hubs in a 
road network~\cite{floCALDAM20}.

Let \gsfull be the corresponding algorithmic problem of determining a smallest possible geodetic set. 
Harary, Loukakos, and Tsuros~\cite{harary1993} proved that the problem is \NP-hard. 
See~\cite{JCMCC96} for the earliest rigorous proof. 
Later works established that the problem remains \NP-hard even for restricted graph 
classes~\cite{floISAAC20,floCALDAM20,DBLP:journals/tcs/ChakrabortyGR23,dourado2010,ekim2012}. 
This has motivated the study of algorithms for structured graph 
classes~\cite{wellpart,floISAAC20,DBLP:journals/tcs/ChakrabortyGR23,dourado2010,JCMCC96,ekim2012,mezzini2018}. 
To cope with this hardness, the problem has also been investigated through the lenses of 
parameterized complexity~\cite{floISAAC20,floICALP24,floSTACS25,KK22,T25} and approximation~\cite{floCALDAM20,DIT21}. 
More recently, interest in this problem has been rejuvenated, as its `metric-based nature' 
has led to interesting conditional lower bounds in parameterized 
complexity~\cite{floICALP24,DBLP:conf/stacs/FoucaudGK0IST25,T25} and enumeration complexity~\cite{BDM24}, 
together with related problems of a similar `metric-based' nature.
Here, we use the term `metric-based graph problem' as an umbrella notion 
for graph problems whose solutions are 
defined using a graph metric, for example, 
shortest distance between two vertices in the case of \gsfull.

While the majority of studies on the \gsfull problem 
focuses on undirected graphs, the problem has also been investigated for 
directed graphs (digraphs). 
Most of the existing work on geodetic sets in digraphs has concentrated on
non-algorithmic questions like determining the minimum and maximum values of these sets, 
as well as the range of possible values. 
See~\cite{chang2004geodetic,Chartrand2003,chartrand2000geodetic,dong2009upper,Farrugia05,lu2007geodetic} 
and the book~\cite[Chapter 6]{bookGC}. 
More recently, Ara\'ujo and Arraes~\cite{AraujoA22} initiated the algorithmic study 
of the \gsfull problem for digraphs. 
Before presenting their results, we formally define the problem addressed in this article 
and compare the results for undirected and directed graphs.

\problemdec{\gsfull}{A directed graph $D$ and an 
integer \(k\).}{Determine whether there exists a subset
\(S \subseteq V(D)\) of size \(k\) such that 
every vertex in \(V(D)\) lies on some directed shortest path
between two vertices in \(S\).}

We define a \emph{\(2\)-cycle} of $D$ as a directed cycle of length 
\(2\): a pair of vertices $u, v$ such that
both arcs \((u, v)\) and \((v, u)\) are present in $D$.
Note that directed graphs with \(2\)-cycles 
inherit the difficulties of designing 
algorithms for the undirected case (since the problem is equivalent on an undirected graph $G$ and on the digraph obtained from $G$ by replacing each edge by a directed 2-cycle).
Hence, the presence of \(2\)-cycles 
plays a critical role when stating results
for digraphs.

As noted by Ara\'ujo and Arraes~\cite[Sec.~6]{AraujoA22},
the unique minimum-sized 
geodetic set of an undirected tree \(T\) 
is equal to the set of its leaves.
Define an \emph{extremal vertex} of a digraph as a vertex that has either no incoming or no outgoing arcs, and denote by $\textrm{Ext}(D)$ the set of extremal vertices of $D$ (in an oriented graph, this set contains all leaves). An \emph{oriented graph} is a digraph that do not contains any 2-cycle. The authors remark that in oriented trees, \emph{i.e.} oriented graphs whose underlying undirected graph is a tree, the similar following property holds.

\begin{proposition}[\cite{AraujoA22}, Proposition 6.1]\label{prop-oriented-tree}
    Let $D$ be an oriented tree. Then, $\textrm{Ext}(D)$ is a minimum geodetic set of $D$.  
\end{proposition}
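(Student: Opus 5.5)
We prove two things: every geodetic set of an oriented tree $D$ contains $\textrm{Ext}(D)$, and $\textrm{Ext}(D)$ is itself geodetic. Together these show that $\textrm{Ext}(D)$ is the (unique) minimum geodetic set. Both steps use the same fact: in an oriented tree, the underlying undirected graph has a unique path between any two vertices. Hence a directed path from $u$ to $v$, if it exists, is unique, and it is automatically a shortest directed path.

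\emph{Necessity.} Let $S$ be a geodetic set and let $x \in \textrm{Ext}(D)$. Suppose, for contradiction, that $x \notin S$. Then $x$ lies on a shortest directed path between two vertices of $S$, and since $x \notin S$ it is an internal vertex of that path. An internal vertex of a directed path has an in-neighbour (its predecessor) and an out-neighbour (its successor) on the path. This contradicts $x$ having no incoming arcs or no outgoing arcs. Hence $\textrm{Ext}(D) \subseteq S$.

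\emph{Sufficiency.} Take any vertex $v \notin \textrm{Ext}(D)$. Build a maximal directed walk forward from $v$: repeatedly follow an outgoing arc. Because $D$ has no 2-cycles and its underlying graph is a tree, the walk can never revisit a vertex, since a revisit would create a cycle in the underlying graph. The graph is finite, so the walk is a directed path that ends at a vertex $w$ with no outgoing arcs, and $w \in \textrm{Ext}(D)$. Symmetrically, walking backward along incoming arcs gives a directed path from some vertex $u$ with no incoming arcs to $v$, and $u \in \textrm{Ext}(D)$. Concatenating the two gives a directed walk $u \to v \to w$.

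This concatenation is the step needing care. It is a path, not just a walk, provided the backward part and the forward part share no vertex other than $v$. If they shared some vertex $y \neq v$, then the segment $y \to v$ followed by the segment $v \to y$ would be a closed directed walk. It would use at least one arc in each part and no 2-cycle, so it would yield a cycle in the underlying tree, which is impossible.

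So the concatenation is a directed $u$–$w$ path through $v$. By uniqueness of underlying paths, it is the only $u$–$w$ path, hence a shortest one. Therefore $\textrm{Ext}(D)$ is geodetic. Combined with necessity, $\textrm{Ext}(D)$ is a minimum geodetic set, and in fact the unique one.
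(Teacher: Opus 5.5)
Your proof is correct and follows essentially the same route as the paper's proof of \Cref{lem-sink-sources-2}; this proposition is the special case of that lemma without 2-cycles, and the paper itself only cites Ara\'ujo and Arraes for it. In both, extremal vertices are forced (\Cref{lem:sink-sources}), and every other vertex is covered by walking forward to a sink and backward to a source, then using uniqueness of paths in the underlying tree. You also explicitly check that the backward and forward dipaths meet only at $v$, a step the paper's version passes over silently.
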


Using a non-trivial set of ideas and 
careful case analysis, the authors of~\cite{AraujoA22} 
generalized this algorithm
to digraphs without \(2\)-cycles whose
underlying undirected graph is a cactus.
(A graph is called a \emph{cactus} if each 
block is either an edge or a cycle,
and hence every tree is a cactus graph.)
We note that their case analysis holds only
when the input digraph does not contain
a \(2\)-cycle. A digraph whose underlying graph is a tree is called a {\it ditree}~\cite{DBLP:conf/wg/DaillyFH23,DBLP:journals/dam/FoucaudGS24}. Note that a ditree, contrary to oriented trees, might contain 2-cycles.

\paragraph{Our results}
As our first result, we present a 
polynomial-time
algorithm for ditrees.

\begin{restatable}{theorem}{lineartimeditree}
\label{thm-linear-time-ditree}
\textsc{Geodetic Set} on ditrees admits a linear-time algorithm.
\end{restatable}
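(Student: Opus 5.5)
The plan is to first reduce geodesic coverage in a ditree to a purely local condition, and then solve the problem by a bottom-up dynamic program over the underlying tree with a constant number of states per vertex. Let $T$ denote the underlying undirected tree of the ditree $D$.

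\textbf{Step 1 (characterisation).} Between any two vertices $s,t$ there is a unique path in $T$. So a directed $s$–$t$ path exists in $D$ iff every edge of the tree path can be traversed in the direction from $s$ to $t$, that is, iff the corresponding arc or 2-cycle allows it. Moreover, such a path is then the unique directed $s$–$t$ path, hence a shortest one. It follows that, for $S\subseteq V(D)$, a vertex $w$ is covered iff one of the following holds:
\begin{itemize}
\item[(i)] $w\in S$;
\item[(ii)] there are two distinct neighbours $x\neq y$ of $w$ in $T$ with $(x,w),(w,y)\in A(D)$, some $s\in S$ has a directed path to $x$ avoiding $w$, and $y$ has a directed path avoiding $w$ to some $t\in S$.
\end{itemize}
Paths leaving $w$ through distinct neighbours lie in distinct components of $T-w$, so their concatenation is a tree path. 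This is the main structural lemma, and it is routine.

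Step 1 already shows that the analogue of Proposition~\ref{prop-oriented-tree} fails for ditrees. Take $z\leftarrow x\leftrightarrow w\leftrightarrow y\rightarrow u$. The only vertices that are never internal to a directed path are $z$ and $u$. However, $\{z,u\}$ does not cover $w$, because no vertex of $S$ can reach $x$ or $y$ except through $w$. Hence a genuine optimisation is needed, and I will not look for a closed formula.

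\textbf{Step 2 (DP).} Root $T$ at an arbitrary vertex $r$. For a vertex $v$ with parent $p$, let $T_v$ be its subtree. The state of $v$ records:
\begin{itemize}
\item $\alpha\in\{0,1\}$: whether some vertex of $S\cap V(T_v)$ has a directed path to $v$ inside $T_v$ (counting $v\in S$);
\item $\beta\in\{0,1\}$: whether $v$ has a directed path to some vertex of $S\cap V(T_v)$ inside $T_v$;
\item a demand $\delta\in\{\emptyset,\mathrm{in},\mathrm{out}\}$ stating what $v$ still requires through the edge $vp$ to become covered, namely nothing, an incoming S-path via $p$, or an outgoing S-path via $p$.
\end{itemize}
A demand is only allowed if the corresponding arc between $v$ and $p$ exists; the state that needs both is impossible by distinctness. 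All vertices of $T_v$ other than $v$ must already be covered, or have their demand resolved, inside $T_v$.

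For each state, $f_v(\alpha,\beta,\delta)$ is the minimum of $|S\cap V(T_v)|$. To compute $f_v$ from the children, I will run an inner DP over the children of $v$, processed one at a time. This inner DP tracks:
\begin{itemize}
\item whether an incoming supply has been found and whether an outgoing supply has been found;
\item whether some child offers both, which is needed to enforce $x\neq y$: two distinct children are required unless the second one comes via $p$;
\item whether $v\in S$.
\end{itemize}
Child $c$ contributes an incoming supply to $v$ iff $(c,v)\in A(D)$ and $\alpha_c=1$, and an outgoing supply iff $(v,c)\in A(D)$ and $\beta_c=1$. A child's demand $\delta_c=\mathrm{in}$ must be met by $v$: either $v\in S$, or $(v,c)\in A(D)$ and $v$ receives an S-path from some other child or from above. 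The latter case is deferred by adding a constraint that must eventually be met through the top side of $v$. Symmetrically for $\delta_c=\mathrm{out}$.

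\textbf{Main obstacle.} Propagating demands correctly is the delicate point. A child's unmet demand can only be resolved by an S-path to or from $v$ that enters through a \emph{different} edge, which may be the parent edge. Consequently, the states must also carry a flag meaning that $v$ must be reached from above, or must reach above, on behalf of a child. I would first try to absorb this into $\delta$ itself, together with a flag saying the demand is for a child, checking that the state space stays of constant size. The point that makes this work is that "reachable from an S-vertex via $p$" only depends on the part of the tree outside $T_v$, so it is a single bit whoever needs it. At the root $r$, all residual demands must be empty, and the answer is $\min f_r$, compared with $k$. Every vertex is processed with $O(1)$ work per child, so the running time is linear, which proves Theorem~\ref{thm-linear-time-ditree}.
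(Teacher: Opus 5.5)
Your proposal is correct in outline, but it takes a genuinely different route from the paper, which uses no dynamic programming. The paper gives a closed-form description of an optimal solution:
\begin{itemize}
\item every geodetic set contains all leaves, and at least one vertex of every source set or sink set, i.e.\ every strongly connected component that no arc enters, resp.\ leaves (in a ditree such a component is a subtree whose edges are all 2-cycles);
\item conversely, the leaves together with one arbitrary vertex from each such component containing no leaf already form a geodetic set. This is proved by an exchange lemma inside these components and an induction that contracts their 2-cycles.
\end{itemize}
The algorithm then only computes leaves and strongly connected components in linear time. So your decision not to look for a closed formula is too pessimistic. In your own example $z\leftarrow x\leftrightarrow w\leftrightarrow y\rightarrow u$, the set $\{x,w,y\}$ is exactly a leafless source set, and the optimum is $\{z,u\}$ plus any one of $x,w,y$. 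The paper's route yields an explicit formula for the geodetic number and a trivial implementation. Yours gives up that structural insight for a mechanical argument that needs no exchange lemmas and would adapt directly to weighted or constrained variants.

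To make your DP airtight, build it explicitly around the fact you state at the end. By Step 1, once $S\cap V(T_v)$ is fixed, the coverage of every vertex of $T_v$ depends on the outside only through two bits. One is $A_v$: some vertex of $S$ outside $T_v$ reaches $p$, and $pv$ is an arc. The other, $B_v$, is the symmetric condition. The outside, in turn, sees $T_v$ only through $\alpha_v,\beta_v$.

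So index the table by $(\alpha,\beta,A,B)$, and read $A,B$ as \emph{assumptions} about what the outside supplies, not as a single recorded demand. This absorbs your ``on behalf of a child'' flags. It also removes the need to exclude the ``both'' state, which your distinctness argument justifies only for $v$'s own coverage. The reformulation matters because requirements can be disjunctive. Take $p\leftrightarrow v\leftrightarrow c\leftrightarrow g$ with $g$ a leaf and $S\cap V(T_v)=\{g\}$: then $c$ and $v$ are covered if and only if $A_v\vee B_v$.

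In the inner DP, the condition ``via a neighbour other than $c$'' is needed both for $v$'s own coverage and for each child's demand. For this you must know the numbers of incoming and outgoing providers among the children, capped at $2$, and whether a unique provider is the child in question. Your flag ``some child offers both'' cannot tell a single child offering both apart from one child offering both plus another offering only incoming supply. Guessing these capped counts in advance and verifying them at the end keeps the work $O(1)$ per child.
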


This naturally leads us to investigate the problem 
for cases where the underlying undirected graph of the input digraph is `close to a tree'. 
We consider the following two definitions of `closeness to trees' for an undirected graph \(G\): 
the minimum number of edges (respectively, vertices) that must be deleted from \(G\)
to obtain a forest.
The minimum size of a set of such edges (respectively, vertices) is
called the \emph{feedback edge set number} (respectively, \emph{feedback vertex set number})
of the graph, and is denoted by \(\fen(G)\) and \(\fvn(G)\), respectively.

Kellerhals and Koana~\cite{KK22} proved that the \gsfull\ problem admits
an algorithm running in time \(2^{\mathcal{O}(\fen(G)^2)} \cdot n^{\mathcal{O}(1)}\)
when the input is an undirected graph.
In the authors' own words, \emph{``It turns out to be quite effortful to
obtain fixed-parameter tractability, requiring the design and analysis of polynomial-time
data reduction rules and branching before employing the main technical trick: Integer Linear
Programming (ILP) with a bounded number of variables.''}
Improving the running time of this algorithm has remained a challenging open problem.
Note that obtaining a fixed-parameter tractable algorithm parameterized by \(\fen\) for 
\gsfull\ when the input digraph is allowed to have \(2\)-cycles
inherently encodes the difficulties encountered by the authors of~\cite{KK22}.
As our next result, we show that a significantly faster algorithm (that does not need to use any ILP) can be obtained
when considering the case in which \(2\)-cycles
are not allowed.

\begin{restatable}{theorem}{fenalgo}
\label{thm-fenalgo}
\textsc{Geodetic Set} on digraphs without
$2$-cycles whose underlying undirected graph 
has feedback edge set number 
$\fen$, admits an algorithm running in time 
\( 2^{\mathcal{O}( \fen )} \cdot n^{\mathcal{O}(1)}\), 
where \(n\)
is the number of vertices in the input digraph.
\end{restatable}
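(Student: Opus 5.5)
The plan is to reduce the digraph to a kernel-like structure with $\mathcal{O}(\fen)$ "important" vertices and then guess how a solution meets them. Let $D$ be the input oriented graph with underlying graph $G$, and let $F$ be a feedback edge set of $G$ with $|F|=\fen$; such a set can be computed in polynomial time as the complement of a spanning forest. First I would preprocess. A vertex of undirected degree one is a source or a sink, so it is extremal. Extremal vertices must belong to every geodetic set, since they can only be endpoints of paths. So I repeatedly add all of $\mathrm{Ext}(D)$ to the solution. Pendant trees hanging off the 2-core are then handled as in \Cref{prop-oriented-tree}: inside an oriented tree part, every vertex lies on a unique path between extremal vertices. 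The key observation I would prove is that in an oriented pendant tree, every non-extremal vertex is covered by the extremal vertices of that tree together with paths leaving through the attachment vertex. Hence each pendant tree can be contracted to a constant amount of information, namely whether it offers a source and/or a sink reachable through the attachment.

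After this, the 2-core of $G$ consists of at most $\mathcal{O}(\fen)$ branch vertices (degree $\ge 3$) joined by at most $\mathcal{O}(\fen)$ maximal paths of degree-2 vertices, possibly with pendant-tree information attached. The central structural step concerns one such long path $P$ in the underlying graph, oriented arbitrarily. Whenever the orientation of $P$ changes direction at an internal vertex, that vertex is a local source or sink within $P$. If no pendant tree is attached there, it is extremal in $D$ and therefore forced into $S$. Between consecutive forced vertices, $P$ splits into directed subpaths. A directed subpath between two forced endpoints is a shortest path only if no shortcut exists, and a shortcut must use the rest of the graph. I would argue that within a maximal directed segment, whether a vertex is covered depends only on which segment endpoints or nearby vertices are chosen, and on a bounded amount of distance information.

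The algorithm then guesses, for each of the $\mathcal{O}(\fen)$ branch vertices and each of the $\mathcal{O}(\fen)$ path-ends (the first and last segments of each long path), whether it lies in $S$. This costs $2^{\mathcal{O}(\fen)}$ guesses. For each guess, the remaining task decomposes along segments: the interior of each directed segment needs either zero or a fixed number of extra vertices, chosen greedily, for example its two endpoints, since any geodesic through the interior of a directed segment traverses it entirely or starts or ends inside it. Each choice is then verified in polynomial time by BFS from every chosen vertex.

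The main obstacle is justifying the locality claim. A shortest path covering an interior vertex of a long path may enter and exit through distant branch vertices. Even so, I expect that, given the guessed membership of branch vertices and segment endpoints, the optimal completion inside each segment is determined independently. I would prove this by an exchange argument: any solution vertex in a segment interior can be replaced by a segment endpoint without uncovering anything. This is where the absence of $2$-cycles is essential, because every segment interior then has a unique direction.
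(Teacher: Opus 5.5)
\textbf{There is a genuine gap.} Your skeleton matches the paper's. You force all extremal vertices, discard non-extremal vertices of pendant trees, use that the base graph has $\mathcal{O}(\fen)$ core vertices and core paths, and brute-force over $\mathcal{O}(1)$ candidate vertices per core path with a polynomial-time check. But the step that carries all the weight is only announced (``I expect that\dots''): \emph{which} $\mathcal{O}(1)$ non-extremal vertices of a core path suffice. The exchange argument you sketch for it is false.

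Take a core path between core vertices $c,c'$ with inner vertices $v_1,\dots,v_l$ (no pendant trees attached). Suppose exactly two inner vertices are extremal: a source $v_i$ and a sink $v_j$ with $i<j$. Then $v_i\to v_{i+1}\to\dots\to v_j$, $v_i\to v_{i-1}\to\dots\to v_1\to c$ and $c'\to v_l\to\dots\to v_{j+1}\to v_j$. Suppose also that $c'$ is reachable from $c$ by a short route outside the path while $j-i$ is large, so the dipath from $v_i$ to $v_j$ is not a geodesic.
\begin{itemize}
\item Every path through an interior vertex $v_k$ ($i<k<j$) is a subpath of that dipath. Hence every geodetic set contains some $v_k$ with $i<k<j$.
\item Replacing it by a segment endpoint changes nothing, since $v_i$ and $v_j$ are already forced, and leaves the interior uncovered.
\item Similarly, for a core dipath $c\to v_1\to\dots\to v_l\to c'$ that is not a geodesic from $c$ to $c'$, moving a solution vertex onto $c$ uncovers its inner vertices.
\end{itemize}
So neither your exchange argument nor the greedy completion ``take its two endpoints'' is correct. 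The right canonical choices are $v_{i+1}$ and $v_1$, respectively. From them the rest of the dipath is the \emph{unique} path to its head, hence a geodesic.

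What is missing is exactly the paper's case analysis on the number of extremal inner vertices of a core path: at least three, two, one, or none. By exchange arguments that move solution vertices onto a core vertex or onto one of $v_1$, $v_l$, $v_{i+1}$, some minimum geodetic set meets the inner vertices of the path only in its extremal vertices, plus possibly:
\begin{itemize}
\item $v_{i+1}$, when there are two extremal inner vertices;
\item $v_1$ and $v_l$, when there is one;
\item $v_1$, when there are none (a core dipath).
\end{itemize}
This gives the candidate set $V_C\cup V_I$ of size at most $8\fen-8$. The paper then tries every subset of it together with all extremal vertices. It therefore needs no claim that segments can be completed independently or greedily. You also leave that claim unproved for the end segments, which do interact with the rest of the digraph through the core vertices.

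Finally, your plan does not say what happens when a local source or sink of a core path carries a pendant tree and is therefore not forced.
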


Finally, we turn our attention to the feedback vertex set number
and show that a fixed-parameter tractable algorithm for this parameter is not possible, under classical hypothesis.
We first introduce another parameter, the \emph{pathwidth} of a graph, which is informally a measure of how much the graph is close to being a simple path. A more formal definition is given in Section~\ref{sec:prelims}.
Recently, Tale~\cite{T25} proved that \gsfull,
restricted to undirected graphs, remains \NP-hard even when the feedback vertex set number and the pathwidth of the input graph are constant. 
This implies that \gsfull, restricted to directed
graphs with possible \(2\)-cycles, remains \NP-hard
even when the underlying undirected graph has
constant feedback vertex set number and pathwidth. 
We prove that a similar result holds even
when \(2\)-cycles (which help channel the hardness 
from the undirected case into the directed case), are absent.
In fact, we prove the result not only for digraphs without 
\(2\)-cycles but also for directed acyclic graphs (DAGs),
which do not have directed cycles of any length.
This significantly strengthens the results of Ara\'ujo and Arraes~\cite{AraujoA22}, which state that 
\gsfull\ is \NP-hard on DAGs even when the underlying
undirected graph is bipartite, co-bipartite, or a split graph.

\begin{restatable}{theorem}{fvnhardness}
\label{thm:fvn-hardness}
\textsc{Geodetic Set} on DAGs (which do not have \(2\)-cycles) 
whose underlying undirected graph 
has feedback vertex set number \(12\) and pathwidth $14$, is \NP-hard.
\end{restatable}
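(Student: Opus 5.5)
Our plan is a polynomial-time reduction from \TDM (already defined as a macro in the paper), restricted to instances where every element occurs in a bounded number of triples. Such an instance consists of disjoint sets \(X,Y,Z\) of size \(q\) and a family \(\mathcal{T}\subseteq X\times Y\times Z\) of \(m\) triples. The target is a DAG \(D\) together with a budget \(k\) such that \(D\) has a geodetic set of size \(k\) if and only if \(\mathcal{T}\) contains a perfect matching.

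The construction would be organised in layers, ordered from a source side to a sink side, so that acyclicity is automatic. For each triple \(t\) we create a small gadget with a \enquote{selector} vertex \(s_t\). Each element \(x\in X\cup Y\cup Z\) gets an element vertex \(v_x\). The key point for bounded \fvn and pathwidth is that we do not connect triples to elements directly, since that would create a large grid-like structure. Instead we use a constant number of global \enquote{hub} vertices, for instance about \(12\) of them, which will form the feedback vertex set. All long-range connectivity is routed through these hubs.

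Distances are encoded by path lengths. Let \(i(t)\) be the index of triple \(t\) and \(j(x)\) the index of element \(x\). Triple \(t\) reaches a hub by a directed path of length roughly \(N-i(t)\), and the hub reaches element \(x\) by a path of length roughly \(j(x)\). These lengths are tuned so that a shortest \(s_t\)-to-hub-to-target path passes exactly through the element vertices belonging to \(t\), in the spirit of the distance-encoding gadgets of Tale~\cite{T25}. Once the hubs are deleted, the remaining graph is a disjoint union of paths and small trees. This gives constant \fvn, and pathwidth at most \(\fvn+2\) or so, consistent with the claimed values \(12\) and \(14\).

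Extremal vertices are forced into every geodetic set. So we attach pendant sources and sinks wherever we want forced vertices, and let \(k\) equal the number of forced vertices plus \(q\). The argument then runs in two directions.
\begin{itemize}
\item[(\(\Rightarrow\))] Given a perfect matching, choose the \(q\) selectors of the matching triples. Every element vertex and every gadget vertex must then lie on some shortest path between chosen vertices. Verifying this is a direct distance check on the construction.
\item[(\(\Leftarrow\))] Given a geodetic set of size \(k\), first show that the forced vertices alone leave every element vertex uncovered. Then show that each additional vertex, wherever it is placed, covers the elements of at most one triple (at most \(3\)), and can be replaced by a selector. Since \(3q\) element vertices must be covered by \(q\) extra vertices, the chosen triples are pairwise disjoint, hence form a perfect matching.
\end{itemize}

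The main obstacle is the distance engineering. Because every pair of forced sources and sinks defines shortest paths, we must ensure that these paths, in particular paths passing through the hubs, never accidentally cover element vertices. We must also ensure that no single non-selector vertex covers elements of two different triples. The first approach to try is to make hub paths carry large length offsets, using a base weight \(M\gg m,q\) and subdividing arcs. Then a shortest path through a hub can deviate into element paths only for exactly matching index sums. Separate hubs are used for \(X\), \(Y\), \(Z\), and for the \enquote{in} and \(\enquote{out}\) directions, so that the index equations decouple; this is where the constant count of about \(12\) hubs comes from. After fixing these offsets, the remaining work is a case analysis showing that every shortest path between forced vertices avoids the element vertices.
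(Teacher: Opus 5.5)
Your outline matches the paper's only at the coarse level: a reduction from \TDM, a constant set of hubs forming the feedback vertex set, path lengths encoding indices, and a budget equal to the number of extremal vertices plus $q$. The backward direction, however, rests on a claim that is false for the construction you describe. You assert that every extra solution vertex, wherever placed, covers the elements of at most one triple. Subpaths of shortest paths are shortest. So if a selector's shortest path to a forced sink $z$ passes through a hub $h$ and then through an element vertex $v_x$, its segment from $h$ is a shortest $h$--$z$ path through $v_x$. Hence a single hub in the solution covers every element vertex that any selector reaches through it, i.e.\ a whole class at once, and the count ``$q$ extra vertices, at most $3$ elements each'' proves nothing. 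The same happens in the paper's digraph: $\delta_2$ together with the sinks $v_l^\delta$ covers every $t_l^\delta$.

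The paper handles this with a forcing gadget rather than by counting. It uses $n$ copies $M_1,\dots,M_n$ of the edge vertices, a vertex $d_i$ receiving arcs from all of $M_i$, arcs $d_i\to c$, and the shortcut arc $a\to c$. Three facts combine: $c$ is the only out-neighbour of $d_i$, only $M_i\cup\{a\}$ reaches $d_i$, and $a\to c$ makes every $a$--$c$ path through $d_i$ non-shortest. So each $M_i\cup\{d_i\}$ must meet the solution. Since the budget is exactly the number of extremal vertices plus $n$, all $n$ extra vertices are pinned into the sets $M_i$, and none is left for a hub. Only then does counting (every element lies in one of at most $n$ chosen edges) yield a perfect matching.

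The other missing piece is the distance gadget itself, which you leave as the ``main obstacle''. In your sketch, a route from $s_t$ through a hub to $v_x$ has length $f(t)+g(x)$, with $f$ depending on the index of the triple rather than on its coordinates. More fundamentally, membership is an equality $i=l$ between the triple's coordinate in a class and the element's index. One competing route of the same separable form only gives a condition $F(i)+G(l)\le 0$, which cannot hold exactly on the diagonal: adding the conditions for $(1,2)$ and $(2,1)$ contradicts those for $(1,1)$ and $(2,2)$.

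The paper instead uses a two-sided comparison, with $\lambda=n+1$. For each class $\delta$ there are three hubs $\delta_1,\delta_2,\delta_3$. From an edge vertex $u_e$ with $x_i^\delta\in e$, paths of lengths $\lambda^2-i\lambda$, $\lambda^2$, $\lambda^2+i\lambda$ lead to them. From them, paths of lengths $\lambda^2+l\lambda$, $\lambda^2$, $\lambda^2-l\lambda$ lead into $w_l^\delta$. Finally, $t_l^\delta$ is reachable only from $\delta_2$, by a path of length $\lambda^2$. The route through $t_l^\delta$ to $v_l^\delta$ has length $2\lambda^2+1$, while the routes via $\delta_1$ and $\delta_3$ have lengths $2\lambda^2+\lambda(l-i)+1$ and $2\lambda^2+\lambda(i-l)+1$. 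So $t_l^\delta$ is covered from $u_e$ if and only if $i=l$.

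Unwanted coverage by forced vertices is then removed by shortcut arcs rather than large offsets. The arc $a\to v_l^\delta$ ensures that the unique source $a$ never covers $t_l^\delta$. Pendant sinks at each hub and at the vertex just before it let $a$ cover the long paths and the hubs. These nine hubs together with $a,b,c$ form the feedback vertex set of size $12$.
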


A preliminary version of this paper appeared in the proceedings of the conference CALDAM 2026~\cite{DBLP:conf/caldam/FoucaudGLNOT26}. The current paper contains all full proofs of our results, as well as a corrected proof of Theorem~\ref{thm-fenalgo} (in particular, the addition of Claim~\ref{claim-path-2extremals}).

\paragraph{Outline} 
We start with a preliminary section in Section~\ref{sec:prelims}. Then we present our linear-time algorithm on ditrees in Section~\ref{sec:trees} proving Theorem~\ref{thm-linear-time-ditree},
followed by describing the fixed-parameter tractable algorithm mentioned in Theorem~\ref{thm-fenalgo} in Section~\ref{sec:fes-fpt-algo}.
We prove \Cref{thm:fvn-hardness} in Section~\ref{sec:hard} and conclude with some open problems in Section~\ref{sec:conclu}.

\section{Preliminaries}
\label{sec:prelims}

We present general definitions and results that will be used throughout the paper. We refer to the book~\cite{DBLP:books/sp/CyganFKLMPPS15} for terminology and details on parameterized complexity.

    A {\it directed graph} (digraph for short) $D$ consists of vertex set $V(D)$ and arc set $A(D)$, where each arc is an ordered pair of vertices. An arc from vertex $v$ to vertex $u$ is denoted by $vu$, and $v$ is its \emph{tail} and $u$ is its \emph{head}.
	A digraph is called an {\it oriented graph} if it does not contain a directed 2-cycle.
	The \emph{underlying undirected graph} (or simply \emph{underlying graph}) of some digraph $D$ is the graph obtained by removing the orientation of each arc of $D$.
	An \emph{oriented path} of a digraph $D$ is a subgraph of $D$ whose underlying graph is a path. A \emph{directed path} (or \emph{dipath}) is an oriented path for which all arcs are oriented in the same direction.
	A digraph is called {\it strongly connected} if every pair of vertices are connected by a directed path. The \emph{in-neighborhood} of vertex $u$ is denoted by $N^-(u)$ and its {\it out-neighborhood} is denoted by $N^+(u)$. The in-neighborhood of a subset $S$ of $V(D)$ is $N^-(S)=(\cup_{u\in S}N^-(u))\setminus S$, and similarly the out-neighborhood of $S$ is defined. A vertex $x$ is called a {\it source}, if $N^-(x)=\emptyset$, and it is called a {\it sink} if $N^+(x)=\emptyset$. A vertex that is a source or a sink is called \emph{extremal}. 
	For two vertices $u$ and $v$, the set of vertices that lie in some shortest path from $u$ to $v$ is denoted by $I(u,v)$ and for a subset $S$ of $V$, the {\it geodetic closure} of $S$, denoted by $I(S)$, is the set of all vertices which lie in some shortest path between two vertices of $S$. In other words, $I(S)=\cup_{u,v \in S} \big(I(u,v)\cup I(v,u)\big)$. We also say that a vertex $v$ \emph{is covered} by two vertices $u$ and $w$ if $v\in I(\{u,w\})$.
	In a directed path $P$ from $u$ to $v$, vertices $u$ and $v$ are called the \emph{tail} and the \emph{head} of $P$, respectively. The vertices of $P$ which are neither the tail nor the head of $P$ are called its \emph{inner vertices}. A {\it directed acyclic graph} (DAG for short) is a digraph which does not contain any directed cycle. 

A vertex $v$ is \emph{transitive} if, for every in-neighbor $u_1$ and out-neighbor $u_2$ of $v$, 
either the arc $u_1u_2$ exists, or $u_1=u_2$.\footnote{Note that the latter condition was not present in the definition from~\cite{AraujoA22}, since the authors only considered digraphs without 2-cycles.} A vertex of a digraph is called a \emph{leaf} if, in the underlying undirected graph, it has degree~1. Note that any leaf of a digraph is either a sink, a source, or transitive.

The following lemma states a simple observation on madatory vertices for any geodetic set.
\begin{lemma}[\cite{AraujoA22}]\label{lem:sink-sources}
	   In any digraph $D$, every source, sink and transitive vertex of $D$ (in particular, every leaf of $D$) belongs to every geodetic set of $D$.
\end{lemma}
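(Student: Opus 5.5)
The plan is to show that a vertex $v$ of any of the three types can never be an \emph{inner} vertex of a shortest directed path. Consequently, if $v\in I(S)$ then $v$ must be an endpoint of some shortest path between two vertices of $S$, which forces $v\in S$. So fix a geodetic set $S$ of $D$ and suppose, for contradiction, that $v\notin S$. Since $S$ is geodetic, there are $x,y\in S$ and a shortest directed path $P$ from $x$ to $y$ with $v\in V(P)$. Because $v\notin S$, we have $v\neq x$ and $v\neq y$, so $v$ is an inner vertex of $P$. Hence $P$ contains a subpath $u_1 v u_2$ with $u_1\in N^-(v)$ and $u_2\in N^+(v)$.

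I then split into cases according to the type of $v$. If $v$ is a source, then $N^-(v)=\emptyset$, so no $u_1$ exists, which is a contradiction. If $v$ is a sink, then $N^+(v)=\emptyset$, again a contradiction.

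If $v$ is transitive, then by definition either $u_1=u_2$ or the arc $u_1u_2$ exists. The case $u_1=u_2$ is impossible, since a shortest path is a simple path and cannot repeat a vertex. In the other case, replacing the segment $u_1vu_2$ of $P$ by the single arc $u_1u_2$ yields a directed walk from $x$ to $y$ that is strictly shorter than $P$. This contradicts the minimality of $P$.

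Finally, for the parenthetical claim about leaves, I use the remark already made in the paper: a leaf has exactly one neighbour $w$ in the underlying graph. If only one of the arcs $vw$, $wv$ is present, then $v$ is a sink or a source. If both are present, then $N^-(v)=N^+(v)=\{w\}$, and $v$ is transitive via the clause $u_1=u_2$. Hence leaves are covered by the previous cases.

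There is no real obstacle here. The only point needing care is the $u_1=u_2$ clause in the definition of transitivity, which arises with $2$-cycles. It is handled by noting that shortest paths are simple, so the configuration $u_1 v u_1$ never occurs inside a geodesic.
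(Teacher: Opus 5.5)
Your proof is correct. The paper does not prove this lemma itself (it only cites Ara\'ujo and Arraes), and your argument is the standard one: a source, sink or transitive vertex can never be an inner vertex of a shortest dipath, so it can only be covered as an endpoint. You also correctly handle the point that is new in this paper's setting, the extra clause $u_1=u_2$ in the definition of transitivity (needed for a leaf lying in a $2$-cycle), which is harmless because a geodesic never contains a segment $u_1 v u_1$.
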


We conclude this section with some formal definition of the parameter pathwidth in an undirected graph. We call a \emph{path decomposition} of a graph $G=(V,E)$ a sequence $(B_1, \dots, B_\ell)$ of subsets of vertices of $V$ called \emph{bags} for which the following properties hold:
\begin{itemize}
    \item Every vertex of $V$ appears in at least one bag;
    \item For each edge of $E$, there exists at least one bag containing both of its endpoints;
    \item For any vertex $v\in V$, if $v\in B_i$ and $v\in B_j$, with $i<j$, then $v\in B_k$ for any $k\in [i,j]$.
\end{itemize}

For a path decomposition $\mathcal D = (B_1, \dots, B_\ell)$, we call the \emph{width} of $\mathcal D$ the value $\max_{1\leq i\leq \ell}|B_i|-1$. The \emph{pathwidth} of a graph $G$ is the minimum width among all possible path decompositions of $G$.
\section{Linear-time algorithm for ditrees}\label{sec:trees}
	
	In this section, we consider the problem of finding the geodetic number of a ditree
	and prove the following theorem.
	
	\lineartimeditree*
	
	In the context of digraphs admitting 2-cycles, we say that a vertex of a digraph $D$ is a \emph{leaf} of $D$ if it is a leaf of the underlying graph of $D$. When $T$ is an oriented tree, a minimum-size geodetic set (\mgs for sort)  may contain some non-leaf vertices: as observed in \Cref{prop-oriented-tree}, in this case, an optimal \mgs always consists of all sources and sinks of the tree.
	To prove \Cref{thm-linear-time-ditree}, we reduce the problem to finding a \mgs in directed trees where the only 2-cycles present are adjacent to some leaf of the graph. We argue then that taking all extremal vertices of the graph, in addition to the leaves contained in a 2-cycle, yields a \mgs. Intuitively, the obtained graph behaves almost exactly as an oriented tree, which enables us to naturally extend \Cref{prop-oriented-tree}.
	
	Let $S$ be a maximal strongly connected component of $D$. Then we call $S$ a {\it source set} if $N^-(S)\setminus S=\emptyset$. Similarly, we call $S$ a {\it sink set} if $N^+(S)\setminus S=\emptyset$. We can state a simple observation about sink and source sets.
		
\begin{observation}
    Let $S$ be a source set or a sink set. Then, any two adjacent vertices of $S$ form a 2-cycle.
\end{observation}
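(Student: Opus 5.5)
The plan is a direct argument from the definitions. It uses maximality of $S$ as a strongly connected component together with the fact that no arc enters $S$ (for a source set) or leaves $S$ (for a sink set). By symmetry — reversing every arc of $D$ turns source sets into sink sets and preserves 2-cycles — it suffices to treat the case where $S$ is a source set.

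Take two adjacent vertices $u,v\in S$; say the arc $uv$ is present. We want to show that $vu$ is also an arc. Since $S$ is strongly connected, there is a directed path $P$ from $v$ back to $u$, and this path lies entirely inside $S$: every vertex on a directed path between two vertices of the same strongly connected component belongs to that component. In a general digraph $P$ could have length greater than one, so the key extra ingredient is that the underlying graph of a ditree is a tree. This is the context of the section; the observation is stated for the ditree $D$ under consideration.

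In the underlying tree, the unique path between $u$ and $v$ is the single edge $uv$. The underlying walk of $P$, from $v$ to $u$, must therefore use the edge $\{u,v\}$. Otherwise the edge $\{u,v\}$ together with a path extracted from that walk would form a cycle in the underlying graph, which is impossible since it is a tree.

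Hence some arc of $P$ joins $u$ and $v$. If that arc were $uv$, then $P$ would pass through $u$ before its end. Since $P$ is a path, $u$ occurs only as its last vertex, so this cannot happen. Thus the arc is $vu$, and $\{u,v\}$ is a 2-cycle.

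The source/sink hypothesis is not really needed for this argument; it holds for any strongly connected component of a ditree. The main (minor) obstacle is exactly the tree-uniqueness step: ensuring that the return path cannot go around through other vertices. That is precisely where acyclicity of the underlying graph is invoked, and I would state it explicitly, noting that it holds for every strong component in a ditree.
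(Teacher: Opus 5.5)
Your proof is correct. The paper states this observation without proof, and your argument is the natural justification it leaves implicit: strong connectivity gives a directed path from $v$ back to $u$, and uniqueness of paths in the underlying tree forces that path to be the single arc $vu$. You are also right that the ditree hypothesis (implicit from the section's context) is essential---a directed triangle is a strongly connected source and sink set with no 2-cycle---while the source/sink condition is never used, so the claim holds for every strongly connected component of a ditree.
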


    We now define a new digraph, whose structure is very similar to an oriented tree.

\begin{definition} \label{def-contracted-tree}
		Let $T$ be a ditree. We define the \emph{contracted ditree} $T^c$ of $T$ by iteratively contracting every 2-cycle of $T$ that does not contain any leaf vertex into a single vertex.
	\end{definition}

In particular, in $T^c$, the only 2-cycles are adjacent to some leaf of the graph. A ditree $D$ with a sink set and its contracted ditree $D^c$ are shown in Figure~\ref{fig:ditree-contracted}. \Cref{lemma-diatree-sink} shows that in contrast to oriented trees, an \mgs in a general ditree may contain vertices which are neither sources, sinks nor leaves. Then, \Cref{lem-sink-sources-2} states that if all source and sink sets are either composed of a unique vertex or contain a leaf, then we can easily find a \mgs in the considered ditree. In particular, this can be applied on $T^c$. Finally, \Cref{lem-replace-in-sourceset,lemma-asli} justify that we can extend any \mgs of $T^c$ to $T$.

    	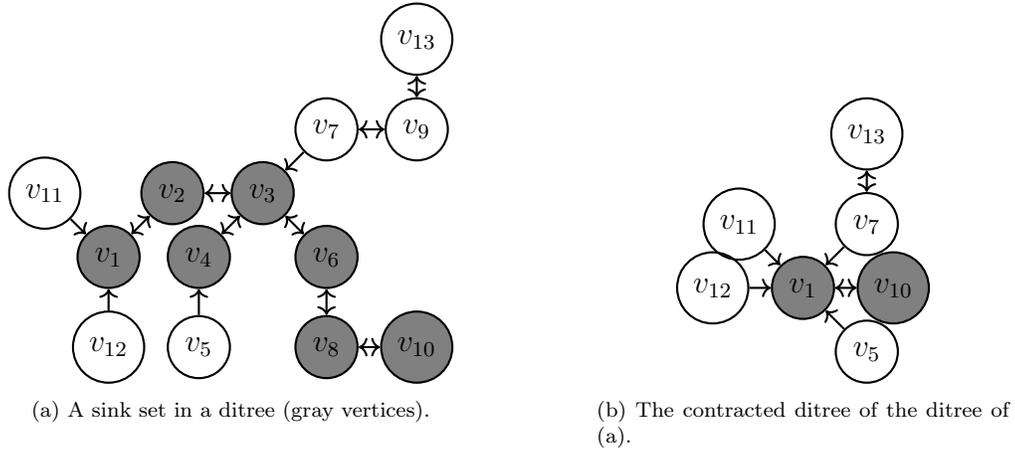
\begin{figure*}[t!]
    \begin{subfigure}[t]{0.49\textwidth}
        \centering
        \begin{tikzpicture}[node distance={12mm}, thick, main/.style = {draw, circle}]
    		\node[main, fill=gray] (1) {$v_1$};
    		\node[main, fill=gray] (2) [above right of=1] {$v_2$};
    		\node[main, fill=gray] (3) [right of=2] {$v_3$};
    		\node[main, fill=gray] (4) [below left of=3] {$v_4$};
    		\node[main] (5) [below of=4] {$v_5$};
    		\node[main, fill=gray] (6) [below right of=3] {$v_6$};
    		\node[main] (7) [above right of=3] {$v_7$};
    		\node[main, fill=gray] (8) [below of=6] {$v_8$};
    		\node[main] (9) [right of=7] {$v_9$};
    		\node[main, fill=gray] (10) [right of=8] {$v_{10}$};
    		\node[main] (11) [above left of=1] {$v_{11}$};
    		\node[main] (12) [below of=1] {$v_{12}$};
    		\node[main] (13) [above of=9] {$v_{13}$};
    
    		\draw [<->](1) -- (2);
    		\draw [<->](2) -- (3);
    		\draw [<->](3) -- (4);
    		\draw[<-] (4) -- (5);
    		\draw [<->](3) -- (6);
    		\draw[<-] (3) -- (7);
    		\draw [<->](6) -- (8);
    		\draw[<->] (7) -- (9);
    		\draw[<->] (8) -- (10);
    		\draw[<-] (1) -- (11);
    		\draw[<-] (1) -- (12);
    		\draw[<->] (9) -- (13);
		\end{tikzpicture}
		\caption{A sink set in a ditree (gray vertices).}
		\label{fig-sink-set}
    \end{subfigure}%
    \hfill
    \begin{subfigure}[t]{0.4\textwidth}
        \centering
        	\begin{tikzpicture}[node distance={12mm}, thick, main/.style = {draw, circle}, scale=.6]
				\node[main, fill=gray] (1) {$v_1$};
				\node[main] (5) [below right of=1] {$v_5$};
				\node[main] (7) [above right of=1] {$v_7$};
				\node[main,fill=gray] (10) [right of=1] {$v_{10}$};
				\node[main] (11) [above left of=1] {$v_{11}$};
				\node[main] (12) [left of=1] {$v_{12}$};
				\node[main] (13) [above of=7] {$v_{13}$};

				\draw[<-] (1) -- (5);
				\draw[<-] (1) -- (7);
				\draw[<->] (1) -- (10);
				\draw[<-] (1) -- (11);
				\draw[<-] (1) -- (12);
				\draw[<->] (7) -- (13);
			\end{tikzpicture}
			\caption{The contracted ditree of the ditree of (a).}
			\label{fig:contracted -tree}
    \end{subfigure}
    \caption{An example of a ditree $T$ and its contracted ditree $T^c$.}
    \label{fig:ditree-contracted}
\end{figure*}

	\begin{lemma}
	    \label{lemma-diatree-sink}
	    Let $T$ be a ditree and let $S$ be a source set or sink set of $T$. Then, every geodetic set of $T$ includes at least one vertex from $S$.
	\end{lemma}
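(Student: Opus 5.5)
By symmetry (reversing every arc swaps sink sets and source sets and preserves geodetic sets, since shortest $u$–$v$ dipaths become shortest $v$–$u$ dipaths), I will treat only the case where $S$ is a sink set. The plan is to argue by contradiction: let $G$ be a geodetic set of $T$ with $G\cap S=\emptyset$, pick any vertex $x\in S$, and show that $x$ cannot lie on any shortest dipath between two vertices of $G$. Since $x\notin G$, it must be covered by some pair $u,w\in G$ with $x\in I(u,w)$, so there is a shortest dipath $P$ from $u$ to $w$ passing through $x$. I will show that no such $P$ exists.

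The key step is to exploit the definition of a sink set: $S$ is a maximal strongly connected component with no out-neighbour outside $S$, i.e.\ $N^+(S)\setminus S=\emptyset$. Every arc leaving a vertex of $S$ therefore ends in $S$. Consequently, any directed walk that enters $S$ stays in $S$ forever. The dipath $P$ visits $x\in S$, so every vertex of $P$ after $x$, in particular its head $w$, belongs to $S$. This contradicts $w\in G$ and $G\cap S=\emptyset$. For a source set the argument is dual: the tail $u$ of $P$ must lie in $S$, because $N^-(S)\setminus S=\emptyset$ forces every vertex before $x$ on $P$ to lie in $S$.

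Two degenerate situations remain. First, $x$ could be an endpoint of $P$, but $x\notin G$, so $x$ is neither $u$ nor $w$; in any case the argument above only needs $x$ to appear on $P$. Second, $u=w$, the trivial path, covers only $u$ itself, which is already handled. Note that the argument uses neither the tree structure nor minimality, only the closure property of $S$, so it holds in any digraph. The only step needing care is stating precisely that $N^+(S)\setminus S=\emptyset$ means no arc leaves $S$, which is immediate from the paper's definition of $N^+(S)$. I do not expect a real obstacle.
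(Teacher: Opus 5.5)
Your proof is correct and follows essentially the same approach as the paper. The paper uses the same observation: no arc leaves a sink set (resp. enters a source set), so no dipath with both endpoints outside $S$ can pass through $S$, and hence two vertices outside $S$ cannot cover any vertex of $S$. Your version spells out this closure property and the degenerate cases more carefully.
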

	
	\begin{proof}
	    Since $S$ is a source or sink set, no directed path with both endpoints outside $S$ can exist. Thus, in a geodetic set of $T$, no two vertices outside $S$ can cover any vertex of $S$. Hence, every geodetic set of $T$ contains at least one vertex from $S$.
	\end{proof}

\begin{lemma}\label{lem-sink-sources-2}
    Let $T$ be a ditree for which any source (resp. sink) set of size at least $2$ contains a leaf. Let $S\subseteq V(T)$ be the set of all sink (resp. source) vertices and all leaves of $T$. Then, $S$ is a geodetic set of $T$ of minimum size.
\end{lemma}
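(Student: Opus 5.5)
The plan has two parts: a lower bound showing every geodetic set contains $S$, and an upper bound showing $S$ covers every vertex. The key fact is that the underlying graph of $T$ is a tree. So between two vertices there is at most one oriented path, and any directed path from $a$ to $b$ is automatically the unique, hence shortest, dipath. It therefore suffices, for each $v\notin S$, to find $a,b\in S$ and a directed path from $a$ to $b$ with $v$ as an inner vertex.

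\emph{Minimality.} Every sink is mandatory by Lemma~\ref{lem:sink-sources}. A leaf of $T$ is, as noted in the preliminaries, a source, a sink or a transitive vertex. This includes a leaf $\ell$ in a 2-cycle with its unique neighbour $u$, which is transitive because $u_1=u_2=u$. So every leaf also lies in every geodetic set, and hence every geodetic set contains $S$. Once $S$ is shown to be geodetic, it is therefore of minimum size.

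\emph{Covering, backward direction.} Fix $v\notin S$. Then $v$ is not a sink and not a leaf, and I will assume it is not a source either. Every source is mandatory by Lemma~\ref{lem:sink-sources} yet cannot be an inner vertex of any dipath, so $S$ can only be geodetic if every source is a leaf or a sink. I would check this against the intended reading of the statement and, if necessary, also include the sources in $S$; the argument below is unaffected.

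Starting at $v$, I walk backwards in the tree. I repeatedly move to an in-neighbour different from the vertex I just came from. Since the underlying graph is a tree, the walk never revisits a vertex, so it terminates at some $x$ whose only possible in-neighbour is the previous vertex $y$.
\begin{itemize}
\item If $x$ has no in-neighbour, then $x$ is a source.
\item If $x$ has other neighbours, they are all out-neighbours of $x$, so in that case $x$ and $y$ form a 2-cycle.
\end{itemize}
In the 2-cycle case, the vertices collected in this way lie inside a strongly connected component with no in-arcs from outside, that is, a source set of size at least $2$. By hypothesis this source set contains a leaf $a$. Since the source set is strongly connected, $a$ reaches $v$ through a dipath inside the set followed by the reversed walk. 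In both cases I obtain some $a\in S$ (leaf, or source) with a dipath from $a$ to $v$ that does not end before $v$.

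\emph{Covering, forward direction (main obstacle).} Symmetrically, I walk forward from $v$, always choosing an out-neighbour different from the vertex just left, until reaching a vertex $z$ with no new out-neighbour. If $z$ is a sink or a leaf, then $z\in S$ and I set $b=z$. The concatenation of the backward dipath, $v$, and the forward dipath is a single dipath from $a$ to $b$, because the two walks leave $v$ through different neighbours and the graph is a tree. It is therefore the unique shortest path, and it covers $v$.

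The remaining case is that $z$ lies in a 2-cycle with its predecessor and has further in-neighbours only. Here the hypothesis of the statement concerns source sets, not sink sets, so it does not directly help. First, I would use the freedom in choosing the forward walk: $v$ is not a leaf, so I can try alternative out-branches. Second, I would exploit the fact that sinks are included in $S$ without restriction, arguing that every forward-closed region of the tree contains either a sink or a leaf; a finite forward-closed region with no sink must contain a cycle of arcs, which in a tree forces 2-cycles. Making this second step airtight without a leaf condition on sink sets is the step I expect to be hardest. If it fails, the precise one-sided form of the hypothesis would need to be revisited.
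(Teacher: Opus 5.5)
Your skeleton matches the paper's proof. Minimality comes from the forced vertices of Lemma~\ref{lem:sink-sources}, and each remaining vertex $v$ is covered by concatenating a dipath from $S$ into $v$ with a dipath from $v$ into $S$, which in a ditree is automatically the unique shortest path. Your minimality part is fine, and you are right that all sources must be in $S$. However, the forward step you leave open cannot be completed as posed, because the one-sided statement is false. Take arcs $s_1v$, $vw$, $wv$, $s_2w$. The source sets $\{s_1\}$ and $\{s_2\}$ are singletons, $\{v,w\}$ is a leafless sink set, $S=\{s_1,s_2\}$ under any reading, and no dipath joins $s_1$ and $s_2$. This also refutes your fallback claim that every forward-closed region contains a sink or a leaf. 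The lemma must be read with the leaf condition on both source sets and sink sets; this two-sided situation is the one in which Lemma~\ref{lemma-asli} applies it. The sink-set condition is exactly what your forward step needs.

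Even under this reading, your backward step makes a false claim: a dead end of the greedy walk at a $2$-cycle need not lie in a source set. Take arcs $sy$, $yv$, $uy$, $yu$, $uz$, $vt$. No source or sink set has size at least $2$, yet your walk may go $v,y,u$ and stop at $u$, whose strong component $\{u,y\}$ receives the arc $sy$. The paper's proof passes over this same point: it asserts that the walk can always continue and never uses the hypothesis. So you found the real difficulty, but your patch does not resolve it. You also do not justify that the two walks leave $v$ through different neighbours.

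What is missing is a non-greedy use of the hypothesis, argued branch by branch.
For $q\in N^+(v)$, let $R_q$ be the set of vertices reachable from $q$ inside the component $B_q$ of $T-v$ containing $q$, and let $C$ be a terminal strong component of $T[R_q]$. All out-arcs of $C$ stay in $C$ except possibly $qv$. So unless $q\in C$ and $qv\in A(T)$, $C$ is a sink set and meets $S$. Thus either $R_q\cap S\neq\emptyset$ (call $q$ good), or $qv\in A(T)$ and $R_q=C$ is strongly connected, disjoint from $S$, with $qv$ as its only outgoing arc. The same holds symmetrically for $p\in N^-(v)$, using the set $Q_p$ of vertices reaching $p$ inside $B_p$.

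A good $p$ and a good $q\neq p$ give the required dipath through $B_p$, $v$, $B_q$. Otherwise one of three cases holds, each a contradiction:
\begin{itemize}
\item No out-neighbour is good. Then $\{v\}\cup\bigcup_q R_q$ is a leafless sink set of size at least $2$.
\item No in-neighbour is good. This is symmetric.
\item The only good neighbour on each side is one vertex $x$. Any other neighbour $y$ (one exists since $v$ is not a leaf) is then bad on both sides. So $R_y=Q_y$ has no arcs to or from $B_y\setminus R_y$, which forces $R_y=B_y$. But $B_y$ always contains a leaf of $T$.
\end{itemize}
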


\begin{proof}
	    Using Lemma \ref{lem:sink-sources}, we know that every geodetic set of $T$ contain $S$. So, if we show that  $S$ is a geodetic set, it would have minimum size. To prove that, let $w\in V(T)\setminus S$. We show that there exists a pair $(v,u)$ of vertices of $S$ such that $w$ belongs to the shortest path connecting $v$ to $u$. Since $w$ is not a sink, there exists a vertex $u_1$ such that $wu_1\in A(T)$. If $u_1\in S$, we define $u=u_1$, otherwise $u_1$ is neither a sink nor a leaf. Hence, there exists a vertex $u_2$ different from $w$ such that $u_1u_2\in A(T)$. This process must stop at some stage $i$, which means that $u_i\in S$ and there is a directed path from $w$ to $u_i$ (we let $u=u_1$). We can prove similarly that there exists a vertex $v\in S$ which has a directed path to $w$. Therefore, there is a directed path in $T$ from $v$ to $u$ passing through $w$. Since the underlying graph of $T$ is a tree there is a unique directed path from $v$ to $u$, hence, $w$ is covered by vertices $v$ and $u$, as desired. 
	\end{proof}

	\begin{lemma}\label{lem-replace-in-sourceset}
	    Let $T$ be a ditree and $S$ be a source set or sink set of size at least~$2$ that contains no leaves. Suppose $M$ is an \mgs for $T$ that includes exactly one vertex from $S$. Define $N\subseteq V(T)$ as the set obtained from $M$ by replacing the unique vertex of $S$ with any other vertex of $S$. Then, $N$ is also an \mgs for $T$. 
	\end{lemma}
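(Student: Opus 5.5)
By symmetry (reverse all arcs), we treat only the case where $S$ is a sink set. Since $|N|=|M|$, it suffices to show that $N$ is a geodetic set. Let $s$ be the unique vertex of $M\cap S$ and $s'\in S$ its replacement. The goal is to show that every vertex $w\in V(T)$ lies in $I(N)$.

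The first step is to understand the structure of $S$. The underlying graph of $T$ is a tree, so $S$ induces a subtree. By the Observation, any two adjacent vertices of $S$ form a $2$-cycle. No arc leaves $S$, although arcs may enter it. In a tree, between any two vertices $a,b\in S$ the unique $a$--$b$ path in the underlying graph lies in $S$ and consists of $2$-cycles. Hence it is the unique directed path from $a$ to $b$, and in particular it is the shortest one. Likewise, for $x\notin S$ and $a\in S$, every directed path from $x$ to $a$ follows the unique tree path. Thus $I(x,a)$ is exactly the tree path, provided it is directed. The second observation is that $s$ covers anything only as a head. Indeed, no directed path goes from $s$ to a vertex outside $S$, and $s$ is the only vertex of $M$ in $S$, so any pair covering a vertex using $s$ has the form $(x,s)$ with $x\in M\setminus S$.

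The main step is the following. Suppose $w$ was covered by a pair $(x,s)$, so $w$ lies on the directed tree path $P$ from $x$ to $s$. Let $e$ be the first vertex of $S$ on $P$, the entry vertex. The portion of $P$ inside $S$ runs from $e$ to $s$. If $w\notin S$, then $w$ lies on the part of $P$ from $x$ to $e$. This part can be extended by the directed path from $e$ to $s'$ inside $S$, giving a directed tree path from $x$ to $s'$. That path is the unique one, hence shortest, so $w$ is covered by $(x,s')$. The remaining case is $w\in S$, which also includes $w=s$ itself. Here the key point is that $S$ contains no leaf. I would use this to argue that every vertex of $S$ lies on the tree path between two vertices $e_1,e_2\in S$ that each receive an arc from outside $S$. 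Then I need vertices of $N\setminus S$ that reach $e_1$ and $e_2$. Every vertex of $S$ must be covered by $M$, and only pairs of the form $(x,s)$ can do this. So for each vertex of $S$ there is some $x\in M\setminus S$ whose directed path into $S$ leads to it. This gives a family of entry points whose convex hull, taken together with $s$, contains all of $S$.

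The main obstacle is this last covering step. I need to show that the vertices of $S$ near $s$, including $s$ itself, are still covered once $s$ is removed. The argument I would try first runs as follows. Every leaf of the subtree $T[S]$, other than possibly $s$, is not a leaf of $T$, so it has a neighbour outside $S$. Since no arc leaves $S$, that neighbour sends an arc into it, so it is an entry point. Tracing backwards from such an entry point, we reach a source or a transitive vertex, and this vertex lies in $M$ by Lemma~\ref{lem:sink-sources}; call the resulting vertex $x_\ell$ for the leaf $\ell$ of $T[S]$. If $s$ is itself a leaf of $T[S]$, then it too is an entry point by the same reasoning. Then for any $w\in S$, choose two leaves $\ell_1,\ell_2$ of $T[S]$ such that $w$ lies on the path between them, with one of them on the same side as $s'$. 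The pair $(x_{\ell_1},s')$ or $(x_{\ell_2},s')$ then covers $w$, because the tree path from $x_{\ell_i}$ to $s'$ is directed. I would need to check carefully that this tracing never exits through $S$, which holds because there are no arcs out of $S$. I would also need to check the degenerate case where $w$ lies on the branch towards $s'$, and there the other leaf is used.
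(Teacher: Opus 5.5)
Your overall strategy works, but one step in the last paragraph is false as stated. The claim ``tracing backwards from such an entry point, we reach a source or a transitive vertex'' fails in ditrees, precisely because of $2$-cycles outside $S$. Take the ditree with arcs $p\leftrightarrow q$, $y\to p$, $z\to q$, $y\leftrightarrow y_2$, $y_2\to r$; its underlying graph is the path $r\,y_2\,y\,p\,q\,z$. Here $S=\{p,q\}$ is a sink set of size $2$ with no leaf, and $M=\{z,r,y,p\}$ is a minimum geodetic set with $s=p$. Tracing back from the entry point $p$ gives $p\leftarrow y\leftarrow y_2$, and then you are stuck, since the only in-neighbour of $y_2$ is $y$. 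Yet neither $y$ nor $y_2$ is a source or transitive: $y_2$ has in-neighbour $y$ and out-neighbour $r$, $y$ has in-neighbour $y_2$ and out-neighbour $p$, and the arcs $yr$, $y_2p$ are absent. The trace has ended in the source set $\{y,y_2\}$, about which Lemma~\ref{lem:sink-sources} says nothing. This is exactly the case you need: with $s'=q$, the vertex $s=p$ can only be re-covered by a vertex of $M\setminus S$ reaching $p$ from outside $S$.

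The repair is short. Let $y\notin S$ be an in-neighbour of the leaf $\ell$ of $T[S]$. Then either $y\in M$, or $y$ is covered by a pair $(a,b)$ of $M$ with $a\neq y$. In both cases some vertex of $M$ reaches $y$. That vertex and its whole directed path to $y$ avoid $S$, because no vertex of $S$ can reach $y\notin S$. (Alternatively, the vertices reaching $y$ contain a source set, which meets $M$ by Lemma~\ref{lemma-diatree-sink}.) For $\ell\neq s$, your own earlier remark also suffices, since a pair $(x,s)$ covering $\ell$ must enter $S$ exactly at $\ell$. With $x_\ell$ obtained this way, $x_\ell\to\cdots\to y\to\ell\to\cdots\to s'$ is the unique, hence shortest, directed path. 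The rest of your argument then goes through: re-route each pair $(x,s)$ to $(x,s')$ for vertices outside $S$, and for $w\in S$ pick a leaf $\ell$ of $T[S]$ with $w$ on the $\ell$--$s'$ path.

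Once repaired, this is a genuinely different route from the paper. The paper moves the chosen vertex to a neighbour in $S$ by editing every maximal directed path at it: it deletes the vertex from paths through that neighbour and prepends the $2$-cycle arc to the others. It then iterates along a path inside $S$. Your version does the swap in one step and makes explicit where the no-leaf hypothesis is used: every leaf of $T[S]$ is an entry point.
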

	\begin{proof}
	Without loss of generality, assume that $S$ is a source set. Arguments are similar for sink sets. 
	Let $u\in S\cap M$, we first show that the lemma holds when $u$ is replaced by one of its neighbours in $S$. Let $P_1, \ldots, P_m$ be all maximal directed paths starting from $u$ and ending at some vertex of $M$. Choose a vertex $v\in S$ such that $v$ is an out-neighbour of $u$. Since $S$ is a source set, $v$ is also an in-neighbour of $u$. Because $M$ is an \mgs and $S$ is a source set, some paths among $P_1, \ldots, P_m$ must contain $v$. let $P_1, \ldots, P_i$ denote exactly those paths that contain $v$. Since $T$ is a ditree, each such path necessarily begins with the arc $uv$. 
	    Moreover, as $u$ is not a leaf, there should also exist paths among
	    $P_1, \ldots, P_m$ that do not contain $v$. Assume without loss of generality, that $i<m$ and $P_{i+1}, \ldots, P_m$ are all these paths. 
	    
	    Now, for each $1\leq j\leq i$, let $P'_j$ be the path obtained from $P_j$ by deleting the vertex $u$ and the arc $uv$. For each $j$ with $i+1\leq j\leq m$, let $P'_j$ be the directed path obtained from $P_j$ by prepending the arc $uv$. 
	    It follows that the set $N$ covers exactly the same set of vertices as $M$, and hence $N$ is also an \mgs. 
	    
	  Next, we show that the lemma remains valid if $u$ is replaced by some vertex $w\in S$ that is not a neighbour of $u$. Since $S$ is a source set, there is a directed path from $u$ to $w$ all whose  internal vertices are in $S$. Moreover, each arc in this path is part of a 2-cycle in $T$. Therefore, starting from $u$, we may successively replace the current vertex by its neighbour along the path, until reaching $w$. At each step, the resulting set is an \mgs and thus, after the final replacement, the set obtained by substituting $u$ with $w$ is also an \mgs, as required.
	\end{proof}
	
	\begin{lemma}\label{lemma-asli}
	    Let $T$ be a ditree and $ \mathcal{S}=\{S_1, S_2, \ldots, S_t\}$ be the set of all source sets and sink sets of $T$ which do not contain any leaf. Let $\mathcal{L}$ be the set consisting of all leaves of $T$ and $M$ be a subset of $V(T)$ that contains $\mathcal{L}$ and exactly one vertex from each $S_i$. Then, $M$ is an \mgs for $T$.
	\end{lemma}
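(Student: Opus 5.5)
Two things must be shown: every geodetic set of $T$ has size at least $|M|$, and $M$ itself is a geodetic set.

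\textbf{Lower bound.} Let $G$ be any geodetic set of $T$. By Lemma~\ref{lem:sink-sources}, $G$ contains every leaf, so $\mathcal{L}\subseteq G$. By Lemma~\ref{lemma-diatree-sink}, $G$ meets each $S_i$. The sets $S_i$ are maximal strongly connected components, hence pairwise disjoint, and by assumption they contain no leaf. So $|G|\ge |\mathcal{L}|+t=|M|$. It therefore suffices to exhibit one geodetic set of this shape, and then reach $M$ itself.

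\textbf{Existence.} I would argue by induction on the number of $2$-cycles of $T$ that contain no leaf, so that contracting them yields $T^c$ of Definition~\ref{def-contracted-tree}. In $T^c$ every source or sink set of size at least~$2$ contains a leaf. The main step is to relate geodetic sets of $T$ and $T^c$. Contracting a $2$-cycle $xy$ with no leaf into a vertex $z$ preserves the tree structure. A dipath of $T$ maps to a dipath of $T^c$, and conversely a dipath through $z$ lifts uniquely, possibly through the arc $xy$ or $yx$. Since the underlying graph is a tree, dipaths between two vertices are unique and hence shortest.

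Let $M^c$ be the geodetic set of $T^c$ given by Lemma~\ref{lem-sink-sources-2}. It consists of the leaves together with the extremal vertices, and an extremal vertex of $T^c$ that is not a leaf is exactly the image of a leafless source or sink set $S_i$. Lift $M^c$ by replacing each such image vertex with one chosen vertex of $S_i$.

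We then check that every vertex $w$ of $T$ is covered. Following the proof of Lemma~\ref{lem-sink-sources-2}, walk forward from $w$ along arcs, avoiding immediate backtracking, until reaching a sink set, a sink, or a leaf, and symmetrically walk backward. Inside a strongly connected set, the walk can be routed through the chosen representative. Any oriented path in a tree that follows arc directions and never revisits a vertex is the unique dipath, hence shortest.

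\textbf{Main obstacle.} The delicate part is routing through an $S_i$ to hit $w$ and still reach the representative. A vertex of $S_i$ (for $S_i$ a sink set, say) must lie on a dipath ending at the representative $r$. Since $S_i$ consists of $2$-cycles forming a subtree with no leaf of $T$, every vertex $w\in S_i$ has a neighbour outside $S_i$, or lies between such vertices, along the tree. Hence there is an in-neighbour $v$ of $w$ on the side away from $r$, coming either from outside $S_i$ or from within $S_i$ further away. Continuing backward from there reaches an element of $M$, and this gives a dipath through $w$ to $r$.

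\textbf{Reaching $M$.} This produces one minimum geodetic set with one vertex from each $S_i$. Lemma~\ref{lem-replace-in-sourceset} then lets us swap the representative of each $S_i$ one at a time to any prescribed vertex, giving exactly $M$.
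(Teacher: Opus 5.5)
Your lower bound is correct, and better justified than the paper's own: every geodetic set contains $\mathcal{L}$ by Lemma~\ref{lem:sink-sources} and meets each of the pairwise disjoint, leaf-free sets $S_i$ by Lemma~\ref{lemma-diatree-sink}. For existence your route differs from the paper's. The paper inducts on $|V(T)|$ by contracting one 2-cycle $uv$ inside a leafless source set, with Lemma~\ref{lem-sink-sources-2} as base case. It uses Lemma~\ref{lem-replace-in-sourceset} to put the representative on the contracted vertex, and then only has to show that the re-expanded vertex $v$ is covered. You instead check coverage of all of $T$ directly; the detour through $T^c$ only produces the candidate set, and the announced induction is never used. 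That direct check has a genuine gap.

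The non-backtracking walk from the proof of Lemma~\ref{lem-sink-sources-2} can be made to work in $T^c$, where every 2-cycle contains a leaf, but not in $T$. Take arcs $pw, wq, qw, rq, ws$ with $p,r,s$ leaves. Then $\{w,q\}$ is a leafless strong component that is neither a source nor a sink set, so $M=\{p,r,s\}$. A forward walk stepping from $w$ to $q$ is stuck: the only out-neighbour of $q$ is $w$, yet $q$ is not a sink, not a leaf, and not in a sink set. Now replace $ws$ by $qs$. Every forward route from $w$ starts with $wq$, and if the backward walk also leaves through $q$ you get $r,q,w,q,s$, which is not a path. Your remark about routing through representatives only concerns source and sink sets, and components like $\{w,q\}$ have none. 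Your ``main obstacle'' paragraph treats only $w\in S_i$, and it ends with ``continuing backward reaches an element of $M$'', which is exactly the unproved step.

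Two facts are missing.

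\textbf{(i)} If $xy$ is an arc with $yx\notin A(T)$, some dipath starting with $xy$ ends in $M$. Among the strong components met by vertices reachable through $xy$, take a terminal one $C^*$. Every arc leaving $C^*$ points away from $x$, hence into another reachable component, contradicting terminality. The only exception would be an arc from the entry vertex of $C^*$ back to its predecessor, but that arc would form a 2-cycle with the entering arc (or contradict $yx\notin A(T)$). So $C^*$ is a sink set and contains a leaf or its representative. Being a subtree entered from outside, all of $C^*$ is reachable. The same holds symmetrically for in-arcs.

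\textbf{(ii)} For $w\notin M$, every branch of $T-w$ is forward-good or backward-good. A branch attached by a single arc is handled by (i). For a branch attached by a 2-cycle, go through $C(w)$ to a tip of $C(w)$ inside the branch; that tip is a leaf or carries a single arc leaving or entering $C(w)$. A forward-good branch exists: use an exit arc of $C(w)$, or, if $C(w)$ is a sink set, its leaf or representative, which is not $w$. Symmetrically a backward-good branch exists. Since $\deg(w)\ge 2$, the two can be chosen distinct, and concatenating them gives a dipath, hence a geodesic, through $w$.

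With (i) and (ii) your argument works for every choice of representatives, so the final appeal to Lemma~\ref{lem-replace-in-sourceset} is not even needed.
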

	\begin{proof}
	    We prove the theorem
	    using induction on  the number of vertices of $T$. If  $T$ has $2$ vertices, then every vertex of $T$ is a leaf. Using Lemma \ref{lem:sink-sources}, $\mathcal{L}$ is an \mgs for $T$. By the induction hypothesis, suppose that
	   the theorem is true for any ditree of size less than $n$. Let $T$ be a ditree with $n$ vertices. If each source (resp. sink) set of $T$ of size at least $2$ contains a leaf, then each source (resp. sink) set which does not contain a leaf is a source (resp. sink) vertex. So in this case, the theorem holds, using Lemma \ref{lem-sink-sources-2}.
	   
	   Now, without loss of generality, suppose that $S_t \in \mathcal{S}$ is a source set and it has at least two vertices. Let $u,v$ be two vertices in $S_t$ which are adjacent and $T'$ be a ditree obtained from $T$ by contracting the edge between $u$ and $v$. Let $w$ be the new vertex in $T'$ that replaces two vertices $u$ and $v$ of $T$. Note that $T'$ is the ditree with the same set of leaves as $T$, and with the same set of source (resp. sink) sets as $T$, except that $S_t$ is replaced by $S'_t$, where $S'_t=(S_t\setminus \{u,v\}) \cup \{w\}$. Hence, by the induction hypothesis, there is an \mgs of $T'$ that contains all leaves and a vertex from each of the set $S_1, S_2, \ldots, S'_t$. 
	   
	   By Lemma \ref{lem-replace-in-sourceset}, we conclude that a subset of vertices of $T'$ that contains all leaves of $T'$ and an arbitrary vertex from each of the set $S_1, S_2, \ldots, S'_t$ is an \mgs. 
	   
	    Let $P'_1, \ldots, P'_r$ be the set of all maximal directed paths in $T'$ starting from $w$. Since $w$ is not a leaf, $r\geq 2$. One can see that the end-vertex of each of these paths is either a leaf or belongs to a sink set. Moreover, no two of these paths terminate at the same sink set; Otherwise the underlying undirected graph would contain a cycle, which is impossible.
	    Since $T'$ has an \mgs which contains all leaves and exactly one vertex from each source (resp. sink) set which does not contain any leaf, using Lemma~\ref{lem-replace-in-sourceset}, $T'$ has an \mgs, $M'$, which contains $w$ and the end-vertices of each path $P'_1, \ldots, P'_r$.
	    
	   Define $M=M'\setminus \{w\}\cup \{u\}$. We claim that $M$ is an \mgs for $T$. Since $T$ and $T'$ have the same set of leaves and the same number of sink (resp. source) sets, Lemma~\ref{lem-sink-sources-2} implies that every \mgs of $T$ contains at least $|M'|$ vertices. Therefore, it remains to show that $M$ is a geodetic set for $T$.
	   
	    Since $u$ and $v$ are in a same source set, $T$ contains both arcs $uv$ and $vu$. Since $M'$ is an \mgs for $T'$, each vertex of $T$ other than $v$ is covered by a path with both ends in $M$. Now, it suffices to show that $v$ is covered by some path with ends in $M$ (note that $T$ is a ditree, so each path is a shortest path between its end vertices), Now, we claim that $v$ belongs to some $P_i$, $1\leq i\leq r$. If not, $T\setminus \{v\}$ is connected and this happens only if $v$ is a leaf, which is not the case, a contradiction.  
	    \end{proof}

\begin{proof}[Proof of Theorem~\ref{thm-linear-time-ditree}]
By Lemma~\ref{lemma-asli}, every \mgs of $T$ contains all leaves and an arbitrary vertex of any source (sink) set that does not contain any leaf. Since any leaf in $T$ is also a leaf in $T^c$ and vice-versa, the algorithm proceeds as follows. First, we construct the contracted ditree $T^c$ from $T$ in linear time by contracting every 2-cycle that is not incident with a leaf. Then, we determine all source (resp. sink) vertices (and the corresponding source (sink) sets in $T$), and leaves of $T^c$, which can also be done in linear time by examining the in-degree and out-degree of each vertex. Since each step requires only linear time in the number of vertices and edges, the overall complexity is $O(|V(T)|+|E(T)|) = O(n)$. 
\end{proof}

\section{Algorithm parameterized by feedback edge set number}
\label{sec:fes-fpt-algo}
    
    In this section, we present an algorithm solving \gsfull parameterized by the feedback edge set number of the input graph and prove \Cref{thm-fenalgo}. 
    
    \fenalgo*

    We introduce useful notions from \cite{foucaudMonitoring2023}. 
    A \emph{core vertex} is a vertex of degree at least 3. 
    A \emph{core path} of digraph $D$ is a path in the underlying graph of $D$ between two core vertices with 
    only degree~2 internal vertices. 
    Note that both endpoints are allowed to be the same core vertex: in this case, we call the core path a \emph{core cycle}.
    We call \emph{proper core path} a core path whose endpoints are two different core vertices. 
    A \emph{leg} of the underlying graph of $D$ is a (non-empty) path between a core vertex and a leaf in said graph. 
    The \emph{base graph} of some undirected graph $G$ is the graph obtained by iteratively removing leaves from $G$ until no leaf is present. We say that the base graph of a digraph $D$ is the base graph of its underlying undirected graph. Note that a  vertex of a base graph is either a core vertex, or an inner vertex of some core path. The following observation comes from \cite[Observation 5]{KK22}.
    \begin{observation}\label{obs-fen-bounds}
        The base graph of any undirected graph $G$ has at most $2\fen(G) - 2$ core vertices and at most $3\fen(G) -3$ core paths.
    \end{observation}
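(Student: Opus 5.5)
The plan is a degree-counting argument on the graph obtained by suppressing degree-$2$ vertices, using that the feedback edge set number equals the cyclomatic number. Let $G$ be an undirected graph and $H$ its base graph. For any graph, $\fen(G)=|E(G)|-|V(G)|+c(G)$, where $c(G)$ is the number of connected components. I will also use that deleting a leaf removes exactly one vertex and one edge and does not change the number of components (I treat an isolated vertex as a trivial tree component, which contributes $0$). So $\fen(H)=\fen(G)$.

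I first handle components separately. Each component $C$ of $H$ contributes $\fen(C)=|E(C)|-|V(C)|+1$, and these contributions sum to $\fen(G)$. By construction $H$ has no leaves, so every nontrivial component has minimum degree at least $2$. A component with no vertex of degree at least $3$ is therefore a cycle (or a single vertex). Such a component contains no core vertex and no core path, so it contributes nothing to either count. This is because core paths are defined as paths between core vertices, so they, including core cycles, only exist where core vertices exist.

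Next, fix a component $C$ containing at least one core vertex. Suppress all degree-$2$ vertices of $C$ to obtain a multigraph $C'$ (loops and parallel edges allowed). Its vertices are exactly the $n'$ core vertices of $C$, and its $m'$ edges correspond bijectively to the core paths of $C$, with core cycles becoming loops. Suppression preserves connectivity and the quantity $|E|-|V|$, so $m'-n'+1=\fen(C)$. Every vertex of $C'$ has degree at least $3$, with a loop counted twice. Handshaking then gives $2m'\ge 3n'$. Substituting $m'=\fen(C)+n'-1$ yields $2\fen(C)+2n'-2\ge 3n'$, that is, $n'\le 2\fen(C)-2$. Feeding this back in gives $m'\le 3\fen(C)-3$. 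In particular, such a component has $\fen(C)\ge 2$.

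Finally, I sum over the components containing core vertices; say there are $c'\ge 1$ of them. The number of core vertices is at most $2\fen(G)-2c'\le 2\fen(G)-2$, and the number of core paths is at most $3\fen(G)-3c'\le 3\fen(G)-3$, because the remaining components have nonnegative $\fen$. If $c'=0$, both counts are $0$. Strictly speaking the bound $0\le 2\fen(G)-2$ then needs $\fen(G)\ge 1$; when $\fen(G)=0$ the statement is vacuous and is read as ``no core vertices''.

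The only delicate step is the bookkeeping for degenerate cases. These are loops arising from core cycles, which must be counted with degree $2$ in the handshake, and cycle components without core vertices. Once those are isolated as above, the argument is a two-line computation.
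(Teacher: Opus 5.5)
Your proof is correct. The paper gives no argument of its own here; it imports the bound from \cite[Observation 5]{KK22}. Your route (invariance of $|E|-|V|+c$ under leaf deletion, suppressing degree-$2$ vertices, then the handshake bound $2m'\ge 3n'$) is the standard proof of this folklore fact, and your per-component accounting even gives the sharper bounds $2\fen(G)-2c'$ and $3\fen(G)-3c'$. On your caveat about forests: the statement is then not vacuous but literally false ($0\le -2$ fails), so it should be read under the hypothesis $\fen(G)\ge 1$.
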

    
    We say that the \emph{base digraph} $D_b$ of $D$ is the subgraph of $D$ such that its underlying graph is the base graph of the underlying graph $D$. 
    A \emph{hanging tree} of the underlying graph of $D$ is the union of some legs iteratively removed to form the base graph of $D$ so that the union of those legs forms a connected component. The \emph{root} of a hanging tree of $D$ is the vertex of the base graph that was linked to the last removed leg of the considered hanging tree. It is easily seen that the underlying undirected graph of $D$ can be decomposed into its base graph and a set of maximal hanging trees. 
    A \emph{hanging ditree} of $D$ is a subdigraph of $D$ such that its underlying graph is a hanging tree of $D$, and its root is the root of the associated hanging tree. Similarly, $D$ can be decomposed into its base digraph and a collection of maximal hanging ditrees. We call an \emph{oriented core path} any core path of the base graph of $D$ whose edges are oriented based on the orientations in $D$. If an oriented core path forms a directed path in $D$, we call it a \emph{core dipath}. \Cref{fig-defs} illustrates some of these definitions by an example digraph and its core digraph.
    
    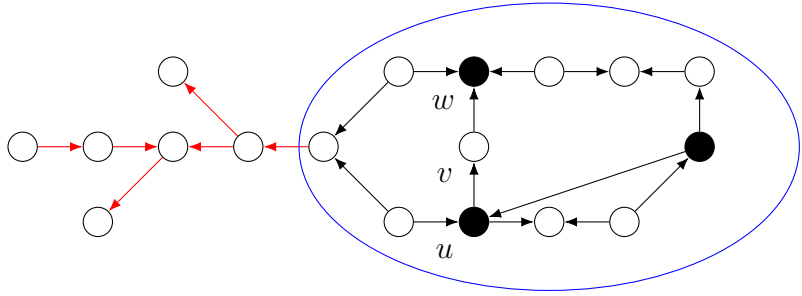
\begin{figure}
        \centering
        \begin{tikzpicture}[every node/.style={draw, circle}, >=Latex]
            \node (A) at (0,0) {};
            \node (B) at (1,0) {};
            \node (C) at (1,-1) {};
            \node (D) at (2,0) {};
            \node (E) at (2, 1) {};
            \node (F) at (3,0) {};
            \node (G) at (4,0) {};
            
            \begin{scope}[every node/.append style={fill=black}]
            \node[label=below left:$w$] (J) at (6, 1) {};
            \node[label=below left:$u$] (L) at (6, -1) {};
            \node (S) at (9, 0) {};
            \end{scope}
            
            \node (H) at (5,1) {};
            \node (I) at (5, -1) {};
            \node[label=below left:$v$] (K) at (6, 0) {};
            \node (M) at (7, 1) {};
            \node (N) at (7, -1) {};
            \node (P) at (8, 1) {};
            \node (Q) at (8, -1) {};
            \node (R) at (9, 1) {};
            
            \begin{scope}[red]
            \draw[->] (A) -- (B);
            \draw[->] (B) -- (D);
            \draw[->] (D) -- (C);
            \draw[->] (F) -- (D);
            \draw[->] (F) -- (E);
            \draw[->] (G) -- (F);
            \end{scope}
            
            \draw[->] (H) -- (G);
            \draw[->] (I) -- (G);
            \draw[->] (H) -- (J);
            \draw[->] (I) -- (L);
            \draw[->] (K) -- (J);
            \draw[->] (L) -- (K);
            \draw[->] (L) -- (N);
            \draw[->] (M) -- (J);
            \draw[->] (Q) -- (N);
            \draw[->] (Q) -- (S);
            \draw[->] (M) -- (P);
            \draw[->] (R) -- (P);
            \draw[->] (S) -- (R);
            \draw[->] (S) -- (L);
            
            \node[draw, ellipse, inner sep=1.5mm, blue, fit= (H) (I) (J) (K) (L) (M) (N) (P) (Q) (R) (S)] {};
            
        \end{tikzpicture}
        \caption{In this example digraph $D$, the red arcs form a hanging ditree. The circled blue subgraph is the core digraph of $D$. It is composed of three core vertices in black, and five core oriented paths. In particular, the path $uvw$ is a core dipath.}
        \label{fig-defs}
    \end{figure}
    
    First, we will argue that deciding which vertices to take in a solution in the hanging ditrees of $D$ is not difficult, using the following observation.
    
    \begin{observation}\label{obs-hanging}
        Let $v$ be a vertex of a hanging ditree $T$ of $D$ rooted in $r$.
        \begin{itemize}
            \item If $v$ has an outgoing arc, $v$ can either reach $r$ or a sink $w\in V(T)$.
            \item If $v$ has an incoming arc, $v$ can be reached by either $r$ or a source $u\in V(T)$.
        \end{itemize}
    \end{observation}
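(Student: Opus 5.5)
The plan is a walking argument in the style of the proof of Lemma~\ref{lem-sink-sources-2}, exploiting the fact that the underlying graph of $T$ is a tree whose only attachment to the rest of $D$ is the root $r$. I treat the first item; the second follows by applying the first to the digraph obtained from $D$ by reversing every arc, since this swaps sources with sinks and incoming with outgoing arcs and leaves the hanging-tree structure and the root unchanged.

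Let $v \in V(T)$ have an outgoing arc. If $v = r$ there is nothing to prove, as $r$ trivially reaches itself. Otherwise $v \in V(T)\setminus\{r\}$, and every neighbour of $v$ in $D$ lies in $V(T)$, because $T$ is attached to the base digraph only through $r$. I build a walk $u_0 = v, u_1, u_2, \dots$ as follows. As long as $u_i \neq r$ and $u_i$ is not a sink, choose an out-neighbour $u_{i+1}$ of $u_i$, preferring one different from $u_{i-1}$ whenever such an out-neighbour exists. All the $u_i$ stay inside $V(T)$, since they are reached through vertices other than $r$.

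The main obstacle is the presence of $2$-cycles. If the walk simply steps back along a $2$-cycle, it might oscillate forever, so termination is not automatic. Two cases settle this.

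First, suppose the walk never traverses an edge of the underlying tree backwards. Then it is a non-backtracking walk in a finite tree, so it is a path and must stop. It stops either at $r$ or at a vertex that is a sink, and that sink lies in $V(T)$.

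Second, suppose the walk is forced to step back, i.e.\ at some $u_i \neq r$ the only out-neighbour is $u_{i-1}$. Then $u_i$ is not a sink, yet it has no out-arc leading away from $u_{i-1}$. This is precisely the situation the paper's conventions must cover. The statement is only true if, in that situation, the relevant terminal vertex is treated as a sink of $T$. The intended setting here is oriented graphs, since Theorem~\ref{thm-fenalgo} assumes $D$ has no $2$-cycles. In that setting the backward case never arises: $u_{i-1}$ is an in-neighbour of $u_i$, so it cannot also be an out-neighbour. Hence every out-neighbour of $u_i$ differs from $u_{i-1}$, and the walk is a directed path in the tree.

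So the plan is to restrict to the $2$-cycle-free setting of this section. There, non-backtracking is automatic and finiteness of $T$ gives termination at $r$ or at a sink $w \in V(T)$. The resulting directed path from $v$ is the required one.
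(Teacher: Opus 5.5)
Your proof is correct and is essentially the paper's argument. The paper assumes $v$ cannot reach $r$, takes a vertex reachable from $v$ at maximum distance, and concludes it is a sink because $T$ has no directed cycle; this is your maximal non-backtracking directed walk, and both arguments rely on the absence of $2$-cycles (the paper only implicitly, and the statement does fail otherwise, e.g.\ $r\to v$ together with a $2$-cycle between $v$ and a leaf), while your arc-reversal reduction makes the paper's ``very similar'' second case precise.
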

    
     \begin{proof}
        We prove only the first case, as the other case is very similar. Suppose $v$ cannot reach the root $r$ but has some outgoing neighbor in $T$. Consider $w$ a vertex reachable by $v$ with maximum distance to $v$. Vertex $w$ is well-defined since the vertices reachable from $v$ are in $T$ and $T$ has no directed cycle. By maximality of the distance to $v$ in $T$, $w$ has to be a sink.
    \end{proof}

    \begin{myclaim}\label{claim-trees-covered}
        Let $D$ be a digraph with $S^0$ its set of extremal vertices. Suppose $S$ is a geodetic set of $D$. Then, $S'=(S\cap V(D_b)) \cup S^0$ is a geodetic set of $D$.
    \end{myclaim}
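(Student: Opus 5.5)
The plan is to check directly that every vertex $v\in V(D)\setminus S'$ lies on a shortest dipath between two vertices of $S'$. We work throughout in the setting of this section, where $D$ has no $2$-cycles. The tool is an \emph{endpoint replacement} step. Let $P$ be a shortest $s$--$t$ dipath with $s,t\in S$. Suppose some endpoint, say $s$, is not in $S'$. Then $s$ lies in a hanging ditree $T$ with root $r$ and $s\neq r$. We will replace $s$ by an extremal vertex $s_0\in S^0$ in such a way that $s_0$ followed by $P$ is again a shortest dipath. The case of $t$ is symmetric, with ``backward'' replaced by ``forward''.

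\textbf{Replacement step.} Starting from $s$, walk backwards along in-arcs, never reusing the vertex just left.
\begin{itemize}
\item The walk cannot turn back on itself, because $D$ has no $2$-cycles.
\item The underlying graph of $T$ is a tree, so the walk traces a simple path of that tree and must stop.
\end{itemize}
The key claim is that the walk never uses the first arc of $P$ and never reaches $r$ through $s$'s side of $T$. Consider the component $C$ of $T-s$ that contains the out-neighbour of $s$ on $P$ (this component contains $r$ whenever $t\notin T$). The first arc of $P$ is oriented out of $s$, and there is no $2$-cycle. Hence the backward walk enters a component of $T-s$ different from $C$. That component hangs off $s$ alone, so the walk ends at a source $s_0$ of $T$ with $s_0\neq r$, and $s_0$ is a source of $D$. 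Every $s_0$--$t$ path in the underlying graph passes through the cut vertex $s$, and the $s_0$--$s$ part is the unique tree path. This gives $d(s_0,t)=d(s_0,s)+d(s,t)$, so the concatenation is a shortest dipath containing all of $P$. The degenerate case in which $s$ has no in-neighbour outside $C$ cannot occur: then every in-neighbour of $s$ would lie in $C$, and $s$ would be a source (hence $s\in S^0\subseteq S'$), contrary to assumption. Making this case distinction airtight, in particular when $t$ lies in the same hanging ditree as $s$, is the step I expect to need the most care.

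\textbf{Covering base-graph vertices.} Let $v\in V(D_b)\setminus S'$. Since $v\notin S'$ and $V(D_b)\cap S\subseteq S'$, we have $v\notin S$, so $v$ lies on a shortest $s$--$t$ dipath $P$ with $s,t\in S$ and $v\neq s,t$. Apply the replacement step to each endpoint of $P$ that is not in $S'$. Endpoints in $V(D_b)$ belong to $S\cap V(D_b)\subseteq S'$, and endpoints in $S^0$ are already in $S'$, so only endpoints inside hanging ditrees need replacing. The result is a shortest dipath with both ends in $S'$ that still contains $v$.

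\textbf{Covering hanging-ditree vertices.} Let $v$ be a non-root vertex of a hanging ditree $T$ with $v\notin S'$. In particular $v\notin S^0$, so $v$ has both an in-arc and an out-arc, and the same argument applies: take a shortest dipath through $v$ between vertices of $S$ and replace its endpoints as above. As an alternative, I would use \Cref{obs-hanging} directly. Forward from $v$ one reaches either a sink of $T$ or $r$, and backward one is reached from either a source of $T$ or $r$. Both directions cannot lead to $r$: that would create a closed walk in a tree, which requires a $2$-cycle. If both lead to extremal vertices $u,w$ of $T$, then $u\to v\to w$ is the unique path in a tree, hence shortest. The remaining case, where one side passes through $r$, falls back to the replacement argument applied to a covering path from $S$.
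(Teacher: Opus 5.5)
There is a genuine gap, and it is the case you flagged. Your replacement step claims that the backward walk from $s$ ends at a source $s_0\neq r$ of $T$. That argument only works when $r$ lies in the component $C$ of $T-s$ containing the out-neighbour of $s$ on $P$. Suppose instead that $t$ lies in the same hanging ditree and $r\notin C$. Then the in-neighbours of $s$ may lead towards $r$, and the walk can leave $T$ through $r$. Inside the base digraph it need not terminate, and $D$ may have no source at all.

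For a concrete instance, take the directed triangle $r\to a_1\to a_2\to r$ with the hanging dipath $r\to y\to s\to x\to t$. Then $S=\{a_1,s,t\}$ is geodetic, $S^0=\{t\}$ and $S'=\{a_1,t\}$. The vertex $x$ lies on the shortest dipath $P=s\to x\to t$, and your step asks for a source reaching $s$, which does not exist. Your alternative paragraph on hanging-ditree vertices sends its case ``one side passes through $r$'' back to this same step, so it has the same gap. Separately, for $v\in S\setminus S'$ there need not be any shortest dipath between vertices of $S$ with $v$ as an inner vertex, so that case has to be argued directly.

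The missing idea: when $r$ can reach $s$, replace $s$ through the root by a vertex of $S\cap V(D_b)$, not by an extremal vertex. This works because $r\in V(D_b)$ is itself either in $S'$ or covered by $S$.

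\begin{itemize}
\item If $r\in S$, then $r\in S'$. The dipath from $r$ to $s$ followed by $P$ is a directed walk in the oriented tree $T$, hence the unique tree path, hence shortest.
\item Otherwise $r$ is an inner vertex of a shortest dipath starting at some $u'\in S$. Prepending a shortest $u'$--$r$ dipath keeps the path shortest. If $u'\notin T$, then $r$ separates $u'$ from $t$, and $s$ separates $r$ from $t$ in $T$. If $u'\in T$, the whole walk stays in the oriented tree $T$.
\item If moreover $u'\notin S'$, then $u'$ is a non-extremal vertex of a hanging ditree, and its covering path leaves towards $r$. So your good case applies to $u'$, and no further recursion is needed.
\end{itemize}

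This is the second half of the paper's proof, which is organised around Observation~\ref{obs-hanging}. A non-extremal $u_0\in T$ is reached either by a source of $T$, which is your good case, or by $r$. In the latter case the paper takes a sink of $T$ reachable from $v$, together with a vertex of $S'$ reaching $r$ as above.
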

    
    \begin{proof}
        Let $v$ be a vertex of $V(D)\setminus S$. The set $S$ is a geodetic set of $D$, so $v$ is located on a shortest path from $u_0$ to $w_0$, with $u_0,w_0\in S$. If both $u_0$ and $w_0$ are either in $D_b$ or some extremal vertex of a hanging ditree, we have $v\in I(S')$. Suppose then it is not the case, and suppose that $u_0$ belongs to some hanging ditree $T$ rooted in $r$ and is not an extremal vertex. (A similar proof can be done when $w_0$ belongs to some hanging ditree and is not an extremal vertex).
        \Cref{obs-hanging} states that $u_0$ can be reached by either a source $u$ in $T$ or by $r$. We consider both cases, and describe for each a shortest path with endpoints in $S'$ covering $v$.
        First, suppose that $u_0$ can be reached by some source $u\in T$, then $v\in I(u,w_0)$ and we are done.
        Otherwise, $u_0$ can be reached by $r$, hence the vertices reached by $u_0$ are vertices of $T$, so $v$ belongs to $T$. By applying again \Cref{obs-hanging} with $v$, there exists a sink $w$ reachable by $v$ in $T$, and $w\in S'$.
        
        It remains to find a source in $S'$ that can reach $r$. If $r$ is in $S$, then $v\in I(r,w)$ and we are done. Otherwise, $r$ is covered by some shortest path with a starting endpoint $u'$ in $S$. If $u'\in S'$, then a shortest path from $u'$ to $r$ and the unique path from $r$ to $w$ going through $v$ form together a shortest path covering $v$ and we are done. Finally, suppose $u'$ is not an extremal vertex and belongs to some hanging ditree $T'$ rooted in $r'$. If $T=T'$, by applying \Cref{obs-hanging}, there exists a source $u''$ that can reach $u'$ in $T$, and the unique path from $u''$ to $w$ covers $v$. Otherwise, $u'$ must reach $r'$ (or it cannot reach $r$) and similarly, we can find a source $u''$ in $T'$ that covers $v$ with $w$.
    \end{proof}
    
    \Cref{claim-trees-covered} identifies which vertices of hanging ditrees are part of minimum geodetic sets, so we next focus on vertices of oriented core paths. We identify four different cases, depending on the number of extremal vertices present in the considered oriented core path. Note that when there is no extremal vertex among the inner vertices of an oriented core path, it is in fact a core dipath.
    
    \begin{myclaim}\label{claim-path-3extremals}
        Let $D$ be a digraph and $S^0$ the set of extremal vertices of $D$. Consider an oriented core path $P$ of $D$ and denote by $V_P$ the set of its inner vertices. Number them as $V_P=\{v_1, \dots, v_l\}$ so that two neighbors in $P$ have consecutive indexes. Suppose that $|S^0\cap V_P|\geq 3$, and denote by $v_i$ (respectively $v_j$) the vertex with minimum index (respectively with maximum index) of $S^0\cap V_P$. We have $\{v_i, \dots, v_j\}\subseteq I(S^0)$ and there exists an \mgs $S$ of $D$ so that $V_P\cap S = V_P\cap S^0$.
    \end{myclaim}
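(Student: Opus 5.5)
The plan rests on the local structure of an oriented core path. Since $D$ has no $2$-cycles and every inner vertex of $P$ has degree~$2$, an inner vertex $v_t$ is extremal exactly when both of its arcs point into it (a sink) or both point out of it (a source). So along $v_1,\dots,v_l$ the extremal vertices alternate between sources and sinks. Between two consecutive extremal vertices $v_a,v_b$ of $V_P$ the subpath is a dipath from the source to the sink. In particular $v_1,\dots,v_{i-1}$ together with the core endpoint $c$ adjacent to $v_1$ forms a dipath leaving $v_i$ if $v_i$ is a source, and entering $v_i$ if $v_i$ is a sink; the same holds symmetrically near $v_j$.

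For the first assertion, I take consecutive extremal vertices $v_a,v_b$ with $i\le a<b\le j$, say $v_a$ a source and $v_b$ a sink. I would show that the segment $v_a\dots v_b$ is the unique directed path from $v_a$ to $v_b$, and hence a shortest one. Any other such path would have to leave $v_a$ on the other side, or enter $v_b$ from the other side. Because $|S^0\cap V_P|\ge 3$, at least one of $v_a,v_b$ has a further extremal vertex of $V_P$ beyond it, and this is the point where the hypothesis is used.
\begin{itemize}
\item If that vertex is beyond $v_a$, it is a sink, and a directed walk cannot pass through a sink. So every path leaving $v_a$ the other way is trapped before it.
\item If it is beyond $v_b$, it is a source, and a source cannot be entered. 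So no path reaches $v_b$ from that side.
\end{itemize}
Either way the segment is forced, so every vertex between $v_a$ and $v_b$ lies in $I(S^0)$. Taking the union over consecutive pairs gives $\{v_i,\dots,v_j\}\subseteq I(S^0)$.

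For the second assertion, I start from an arbitrary \mgs $S$; by \Cref{lem:sink-sources}, $S^0\subseteq S$. I set $S'=(S\setminus V_P)\cup(S^0\cap V_P)$, so $|S'|\le|S|$, and I need to show that $S'$ is geodetic. Let $x\in (S\cap V_P)\setminus S^0$ be a removed vertex, and consider a shortest path $Q$ in which $x$ is an endpoint.
\begin{itemize}
\item \emph{$x$ is the tail of $Q$.} Let $s$ be the source at the start of the dipath segment containing $x$; it is either some extremal $v_t$ or the segment through $v_i$ or $v_j$. The unique way into $x$ is along this segment from $s$. I would prepend the segment from $s$ to $x$. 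The resulting path is still shortest, by the same blocking argument as above: every path out of $s$ must traverse that segment, since the other side of $s$ is blocked by the neighbouring extremal vertex.
\item \emph{$x$ is the head of $Q$.} Symmetrically, I append the segment from $x$ to the corresponding sink.
\end{itemize}
The one case not handled this way is the end segment $v_1\dots v_{i-1}$ directed toward $c$, taking the case $v_i$ a source. There a path ending at $x$ must start inside the segment, because the segment cannot be entered from $c$ against the arc direction. Such a path only covers segment vertices, and those are also covered by prepending to paths that leave through $c$. If no path used $x$ in an essential way, removing $x$ loses nothing. Hence $I(S)\subseteq I(S')$.

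The main obstacle is the shortest-path preservation when extending paths whose endpoint lies in the end segments $v_1..v_{i-1}$ or $v_{l}..v_{j+1}$. There the blocking vertex exists on only one side, and the path may continue into the rest of $D$ through the core endpoint. I would handle this by checking explicitly that all paths from $v_i$ (or into $v_i$) are forced through the segment toward $c$. The next extremal vertex $v_{i'}$, with $i'>i$, blocks the other direction, and this is what keeps $d(v_i,y)=d(v_i,x)+d(x,y)$ for every $y$.
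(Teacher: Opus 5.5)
Your first part is correct and matches the paper's argument: consecutive extremal vertices are joined by a unique, hence shortest, dipath, and this is where $|S^0\cap V_P|\ge 3$ is used.

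The gap is in the second part, exactly at the end segments you flag. Take $v_i$ a source, so the end segment is $v_i\to v_{i-1}\to\dots\to v_1\to c$, and a deleted vertex $x=v_p$ with $p<i$. In $S$, the vertices $x,v_{p+1},\dots,v_{i-1}$ may be covered \emph{only} by paths ending at $x$, starting at $v_i$ or at another deleted vertex of the segment. After the deletion you must exhibit a shortest path with both ends in $S'$ through them.

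\begin{itemize}
\item Your sentence ``those are also covered by prepending to paths that leave through $c$'' assumes such a path exists.
\item ``If no path used $x$ in an essential way, removing $x$ loses nothing'' is not an argument.
\end{itemize}

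The missing step is to use how $c$ itself is covered. Either $c\in S$, or $c\in I(u,w)$ with $u,w\in S\setminus\{c\}$. In the second case $w$ is reachable from $c$, so $w\neq x$, because no vertex among $v_1,\dots,v_i$ is reachable from $c$. The dipath from $v_i$ to $c$ followed by a shortest $c$--$w$ path is then a shortest $v_i$--$w$ path, and it covers the whole end segment.

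Two related omissions:
\begin{itemize}
\item The mirrored case, $v_i$ a sink and $x$ the tail of $Q$, is not handled by your first bullet either. There the segment containing $x$ starts at $c$, not at a source, so you need the symmetric fact that some vertex of $S$ reaches $c$.
\item Because you delete all non-extremal vertices of $V_P$ at once, the vertex $w$ above, or the other endpoint of $Q$, may itself be deleted. This happens, for instance, in the other end segment or when $P$ is a core cycle, and you must then compose extensions at both ends. It is cleaner to delete one vertex at a time: each single deletion already contradicts the minimality of $S$.
\end{itemize}

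Your justification of $d(v_i,y)=d(v_i,x)+d(x,y)$, which should be stated for $y$ reachable from $x$, is also incomplete. Trapping walks at $v_{i'}$ is not enough, since $y=v_{i'}$ could be entered from its far side via $c$. This is exactly what goes wrong with only two extremal vertices. It is excluded here because the next extremal vertex beyond $v_{i'}$ is a source, so that far side is unreachable from $c$.

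For comparison, the paper avoids this global identity altogether. It discards non-extremal vertices strictly between $v_i$ and $v_j$, since their presence would contradict minimality. For $p<i$, it either deletes $v_p$ when $v^\leftarrow\in S$, or swaps $v_p$ for $v^\leftarrow$. In both cases the end segment is covered by the unique dipath between $v_i$ and $v^\leftarrow$. A shortest path from $v_p$ through $v^\leftarrow$ is split at $v^\leftarrow$, and its remaining part is covered because $v^\leftarrow$ is in the set. This keeps everything local but yields only existence. Once repaired, your deletion-only route proves more: no minimum geodetic set contains a non-extremal inner vertex of $P$.
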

	
	\begin{proof}
	    We argue that for any $k\in[i,j]$, $v_k$ is covered by two vertices of $S^0$. Indeed, since $|S^0\cap V_P|\geq 3$, there exist other extremal vertices between $v_i$ and $v_j$. Consider $v_m$ the vertex with minimum index among those vertices. If $v_i$ is a sink, then $v_m$ is a source (and vice versa), so if $k\in[i,m]$, $v_k\in I(v_i, v_m)$, since there exist a unique path between $v_i$ and $v_m$. One can apply iteratively this argument with $v_m$ and the other vertex of $S^0\cap V_P$ it can reach (or that can reach $v_m$, respectively).
	    
	    Suppose now that $S$ is an \mgs of $D$, and that some vertex $v_p$ belongs to $(V_P\cap S)\setminus S^0$. Since $\{v_i, \dots, v_j\}\subseteq I(S^0)$, if $i<p<j$, $S\setminus\{v_p\}$ is still a geodetic set, which is in contradiction with the minimality of $S$. Suppose then without loss of generality that $p < i$. Since $S$ is a geodetic set, the core vertex $v^\leftarrow$ neighboring $v_1$ is either in $S$ or covered by a shortest path starting at some vertex $u\in S$ and ending at some vertex $w\in S$. In the former case, vertices $v_1, \dots, v_i$ are covered by the unique shortest path between $v^\leftarrow$ and $v_i$ and $v_p$ can be removed from $S$ as above. In the latter case, consider $S' = (S\setminus \{v_p\})\cup\{v^\leftarrow\}$. Any shortest path for which $v_p$ is an endpoint can either be extended to a shortest path with $v^\leftarrow$ as an endpoint, or goes through $v^\leftarrow$. Thus, $S'$ is also an \mgs of $D$. The same reasoning applies if $j<p$ by considering $v^\rightarrow$, the core vertex neighboring $v_l$.
	\end{proof}
	
	The proof for the case where there are only two extremal vertices in the oriented core path is very similar to the previous case.
	
	\begin{myclaim}\label{claim-path-2extremals}
	    Let $D$ be a digraph and $S^0$ the set of extremal vertices of $D$. Consider an oriented core path $P$ of $D$ and denote by $V_P$ the set of its inner vertices. Number them as $V_P=\{v_1, \dots, v_l\}$. Suppose that $|S^0\cap V_P|= 2$, and denote by $v_i$ (respectively $v_j$) the vertex with minimum index (respectively with maximum index) of $S^0\cap V_P$. There exists an \mgs $S$ of $D$ so that $V_P\cap S \subseteq \{v_i, v_{i+1}, v_j\}$. In particular, $V_P\cap S$ contains at most one non-extremal vertex.
	\end{myclaim}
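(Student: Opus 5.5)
The plan is to follow the structure of the proof of Claim~\ref{claim-path-3extremals}. I first pin down the orientation of $P$ around $v_i$ and $v_j$. I then show that all of $V_P$ outside the segment $\{v_{i+1},\dots,v_{j-1}\}$ can be handled by the same replacement argument as before, and that inside the segment at most one extra vertex, namely $v_{i+1}$, is ever needed.

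\textbf{Step 1 (orientation).} No inner vertex of $P$ other than $v_i,v_j$ is extremal, so every non-extremal inner vertex has exactly one in-arc and one out-arc along $P$. Hence the subpath $v_i v_{i+1}\cdots v_j$ is a dipath, since the direction along $P$ can only change at an extremal vertex. By symmetry assume it is directed from $v_i$ to $v_j$, so $v_i$ is a source and $v_j$ a sink. Then $v_1,\dots,v_{i-1}$ form a dipath from $v_i$ to the core vertex $v^\leftarrow$. Likewise $v_{j+1},\dots,v_l$ form a dipath from the core vertex $v^\rightarrow$ into $v_j$.

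\textbf{Step 2 (outer parts).} Let $S$ be an \mgs and $v_p\in S\setminus S^0$ with $p<i$. The only vertex that can reach $v_p$ is $v_i$, through the chain $v_i,\dots,v_p$, and any shortest path leaving $v_p$ must continue through $v^\leftarrow$. Exactly as in Claim~\ref{claim-path-3extremals}, replace $v_p$ by $v^\leftarrow$; if $v^\leftarrow$ is already in $S$, just delete $v_p$. This keeps a geodetic set of no larger size, because every shortest path with endpoint $v_p$ either extends to a shortest path ending at $v^\leftarrow$ or already passes through $v^\leftarrow$. The vertices $v_1,\dots,v_{i-1}$ themselves stay covered via $v_i$ and whatever covers $v^\leftarrow$. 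The case $p>j$ is symmetric, using $v^\rightarrow$.

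\textbf{Step 3 (inner segment; main obstacle).} The dipath $v_i\to\cdots\to v_j$ need not be a shortest $v_i$--$v_j$ path: $v_i$ may leave through $v_{i-1}$, travel to $v^\leftarrow$, go around the base graph to $v^\rightarrow$, and enter $v_j$ from the other side. If the chain is a shortest path, all of $v_i,\dots,v_j$ are covered by $S^0$, and any $v_p\in S$ with $i<p<j$ can be deleted, contradicting minimality. Otherwise, I will argue two things.
\begin{itemize}
\item[(a)] Every geodetic set must contain a vertex of $\{v_{i+1},\dots,v_{j-1}\}$. A shortest path through an inner vertex $v_k$ of the segment must run inside the chain. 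It can start there only at $v_i$ and can end there only at $v_j$, and the path from $v_i$ to $v_j$ along the chain is not shortest.
\item[(b)] $v_{i+1}$ together with $v_j$ covers the whole segment. The unique out-arc of each $v_k$, $i<k<j$, lies on the chain, so the chain from $v_{i+1}$ to $v_j$ is the only $v_{i+1}$--$v_j$ path and hence a shortest one.
\end{itemize}
Then I swap the segment vertices of $S$ for $v_{i+1}$. The delicate point is checking that this loses no coverage outside the segment. A shortest path starting at some $v_p$ with $i<p<j$ must run along the chain to $v_j$ and stop there, since $v_j$ is a sink. Every such path is therefore a subpath of the $v_{i+1}$--$v_j$ chain, so it is covered by the pair $(v_{i+1},v_j)$. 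The resulting set has size at most $|S|$, so it is still an \mgs, and its intersection with $V_P$ lies in $\{v_i,v_{i+1},v_j\}$.
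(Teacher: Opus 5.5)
Your proof is correct and follows essentially the same route as the paper. Both split on whether the dipath from $v_i$ to $v_j$ is a shortest path, show that in the non-shortest case every geodetic set must contain a vertex strictly between $v_i$ and $v_j$ and that $v_{i+1}$ suffices, and handle the vertices outside $\{v_i,\dots,v_j\}$ with the replacement argument of Claim~\ref{claim-path-3extremals}. You also make explicit the orientation analysis and the check that swapping the segment vertices for $v_{i+1}$ loses no coverage, which the paper leaves implicit (your ``by symmetry'' should be read as reversing all arcs, which preserves geodetic sets and the indexing).
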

	
	\begin{proof}
	    We consider two different cases, depending on whether the dipath between $v_i$ and $v_j$ is a shortest path or not in $D$. If it is the case, then the vertices of $\{v_{i+1}, \dots, v_{j-1}\}$ are covered by $v_i$ and $v_j$. If this path is not a shortest path between $v_i$ and $v_j$ any \mgs of $D$ must contain one vertex $v_k$ for $k\in (i,j)$. Indeed, the only accessible extremal vertices for vertex $v_k$ are $v_i$ and $v_j$. Furthermore, any vertex $v_k$ is suited for covering vertices in $\{v_{i+1}, \dots, v_{j-1}\}$; in particular $v_i+1$ is. The other vertices of $P$ can be handled by the endpoints of the core path, as in the proof of \Cref{claim-path-3extremals}.
	\end{proof}
	
	\begin{myclaim}\label{claim-path-1extremal}
	    Let $P$ be an oriented core path of some digraph $D$ and denote by $V_P$ the inner vertices of $P$. Number vertices of $P$ such that $V_P = \{v_1, \dots, v_l\}$. Suppose that $|V_P\cap S^0| = 1$ and denote the vertex of $V_P\cap S^0$ by $v_p$. There exists an \mgs $S$ of $D$ such that $|V_P\cap S|\leq 3$ and $V_P\cap S \subseteq \{v_1, v_p, v_l\}$. In particular, $V_P\cap S$ contains at most two non-extremal vertices.
	\end{myclaim}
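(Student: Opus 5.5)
We argue as in the proofs of \Cref{claim-path-3extremals,claim-path-2extremals}: start from an arbitrary \mgs $S$ of $D$ and modify it, without increasing its size, until $V_P\cap S\subseteq\{v_1,v_p,v_l\}$. By \Cref{lem:sink-sources}, $v_p\in S$. Assume without loss of generality that $v_p$ is a source. Since $v_p$ is the only extremal inner vertex, the sub-paths $v_p v_{p-1}\cdots v_1 v^\leftarrow$ and $v_p v_{p+1}\cdots v_l v^\rightarrow$ are dipaths leaving $v_p$, where $v^\leftarrow$ and $v^\rightarrow$ are the core endpoints of $P$. (If $v^\leftarrow=v^\rightarrow$, i.e.\ $P$ is a core cycle, the same argument applies.) We call these the left and right branches, and treat each branch separately, the two being symmetric.

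Fix the left branch and let $L=\{v_1,\dots,v_{p-1}\}$. Every inner vertex of $L$ has in-degree and out-degree one. Hence any directed path through some $v_k\in L$ either starts in $L\cup\{v_p\}$ and leaves through $v_1\to v^\leftarrow$, or ends in $L$ after entering at $v_p$. No path can enter $L$ from $v^\leftarrow$, since the arc is oriented $v_1\to v^\leftarrow$.

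\textbf{Case 1: the dipath $v_p\cdots v_1 v^\leftarrow$ extended to some vertex of $S$ is a shortest path.} Then all of $L$ is covered by $v_p$ together with a vertex of $S$ outside $L$, so $S\cap L$ can be deleted. This contradicts minimality unless $S\cap L=\emptyset$.

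\textbf{Case 2: otherwise.} Some vertex of $L$ must be covered by a pair of vertices of $S$ with an endpoint in $L$. We then replace $S\cap L$ by the single vertex $v_1$. The key observation is that for any $x\in S\cap L$, every shortest path starting at $x$ continues along the branch to $v_1$. So the pair $(v_p,v_1)$ covers all of $L$: the dipath from $v_p$ to $v_1$ is the unique path, hence shortest. Moreover, any shortest path $x\leadsto y$ with $x\in L$ and $y\notin L$ has its suffix from $v_1$ still shortest, so it still covers whatever outside vertex it covered. A shortest path ending at $x\in L$ must come from $v_p$ and only covers vertices of $L$. So $|S\cap L|\ge1$ is replaced by one vertex, the size does not grow, and the set remains geodetic.

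Doing this on both branches gives $V_P\cap S\subseteq\{v_1,v_p,v_l\}$, hence $|V_P\cap S|\le 3$ with at most two non-extremal vertices.

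The main obstacle is the degenerate case in Case 2 where $p=1$ (or $p=l$), so that a branch is empty. There the branch consists only of $v^\leftarrow$, and the argument must instead move a chosen vertex onto the core vertex, as done with $v^\leftarrow$ in \Cref{claim-path-3extremals}. A second concern is checking carefully that truncating shortest paths to start at $v_1$ preserves coverage of vertices outside $P$. I expect this to follow from subpaths of shortest paths being shortest.
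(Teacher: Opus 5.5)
Your overall strategy matches the paper's. You split the non-extremal inner vertices into the two out-branches of $v_p$, then trade whatever $S$ contains on a branch for the branch vertex next to the core endpoint ($v_1$ or $v_l$); the paper performs exactly this swap $v_k\mapsto v_1$. Your Case~2 argument for the swap is correct. Handling an arbitrary number of chosen vertices per branch is, if anything, cleaner than the paper's ``$|S'_P|>2$ by similar arguments''.

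The problem is Case~1. Knowing that $L$ is covered by $v_p$ and some $y\in S$ does not let you delete $S\cap L$. A vertex $x\in S\cap L$ may start a shortest path that leaves $P$ through $v^\leftarrow$ and covers vertices outside $P$ that lie on no shortest path from $v_p$, because $v_p$ reaches the relevant targets faster through the other branch. Here is a concrete instance:
\begin{itemize}
\item Take core vertices $A=v^\leftarrow$ and $B=v^\rightarrow$. Let $P$ consist of the arcs $v_3v_2$, $v_2v_1$, $v_1A$, $v_3v_4$, $v_4B$, so $v_p=v_3$ is a source.
\item Add the arcs $Az'$, $z'z$, $Bz$ and $Ay$, $By_1$, $y_1y_2$, $y_2y$.
\item Then $S^0=\{v_3,z,y\}$. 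The path $v_3v_2v_1Ay$ is shortest (length $4$ against $5$ via $B$), so your Case~1 applies to $L=\{v_1,v_2\}$.
\item However, $d(v_3,z)=3$ via $B$. So $z'$ can only be covered if $S$ contains one of $A,v_1,v_2,z'$. Likewise $y_1$ needs one of $B,v_4,y_1,y_2$.
\item Hence $\{v_3,z,y,v_2,v_4\}$ is an \mgs, yet removing $v_2$ leaves $z'$ uncovered.
\end{itemize}
So the claimed contradiction with minimality does not exist.

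The repair is immediate. Your Case~2 replacement never uses the Case~2 hypothesis, so drop the case distinction. Whenever $S\cap L\neq\emptyset$, replace $S\cap L$ by $v_1$, and symmetrically replace the chosen vertices on the right branch by $v_l$; with that change the proof is complete. Also, the ``degenerate case'' $p=1$ is not an obstacle. Then $L=\emptyset$ and there is nothing to replace. No vertex ever needs to be moved onto a core vertex as in \Cref{claim-path-3extremals}, since every non-extremal inner vertex of $P$ lies on one of your two branches.
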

	\begin{proof}
    Consider an \mgs $S'$ of $D$ and denote by $S'_P$ the set of vertices of $S'$ belonging to $V_P$. Suppose for a contradiction that $S'_P \not\subseteq \{v_1, v_p, v_l\}$. If $|S'_P| = 1$, then $S'_P = \{v_p\}$ since $v_p$ is an extremal vertex, a contradiction. If $|S'_P| = 2$, consider the vertex $v_k$ in $S'_P$ different from $v_p$. Suppose without loss of generality that $k\leq p$. Since $S'$ is an \mgs, vertices $v_{p+1}, \dots, v_l$ are covered by a shortest path for which $v_p$ is an endpoint. We argue that $S = (S' \setminus\{v_k\})\cup \{v_1\}$ is also an \mgs. First, vertices $v_1,\dots,v_{p-1}$ are covered by the shortest path between $v_1$ and $v_p$. Any shortest path with one endpoint being $v_k$ can be either extended or shortened into a shortest path with $v_1$ replacing $v_k$ as an endpoint, hence $S$ is a geodetic set of $D$. Furthermore, $S$ has the same cardinality as $S'$, thus it is minimum.
    Finally, if $|S'_P| > 2$, it can be verified that $S = (S'\setminus S'_P)\cup \{v_1, v_p, v_l\}$ is a geodetic set of size at most $|S'|$ using similar arguments.
	\end{proof}
	
	\begin{myclaim}\label{claim-dipaths}
	    Let $P$ be a core dipath of some digraph $D$. Denote by $V_P$ the inner vertices of $P$ and number them in the order induced by the arcs of the dipath so that $V_P = \{v_1, \dots, v_l\}$. There exists an \mgs $S$ of $D$ such that $V_P\cap S\subseteq\{v_1\}$. 
	\end{myclaim}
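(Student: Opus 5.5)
The plan is an exchange argument in the style of \Cref{claim-path-1extremal}. Let $P$ have core endpoints $a$ and $b$, so that $P$ is the dipath $a\,v_1 v_2\cdots v_l\, b$. Since $D$ has no $2$-cycles and the $v_k$ have degree $2$ in the underlying graph, every inner vertex has exactly one in-neighbour ($v_{k-1}$, or $a$ for $k=1$) and one out-neighbour. Hence none of the $v_k$ is extremal.

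First, observe how such inner vertices can be used by shortest paths. Any directed path that contains $v_k$ as an inner vertex must traverse a contiguous segment of $P$ in the forward direction. Any directed path that starts at $v_k$ must first go $v_k v_{k+1}\cdots v_l b$ and then continue outside $P$ (or stop inside $P$). Symmetrically, any path ending at $v_k$ arrives through $a\, v_1\cdots v_{k-1}$.

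Now take an \mgs $S'$ and let $S'_P=S'\cap V_P$. We replace $S'_P$ by $S'_P\cap\{v_1\}$, or by $\{v_1\}$ when $S'_P\neq\emptyset$. The argument splits into the following steps.

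\begin{enumerate}
\item \textbf{Two or more chosen vertices.} If $|S'_P|\ge 2$, let $v_s$ and $v_t$ be the chosen vertices of smallest and largest index. Then $S=(S'\setminus S'_P)\cup\{v_1\}$ has size at most $|S'|-1$ as soon as $|S'_P|\ge 2$. The issue is that we must check $S$ is still geodetic.
\item \textbf{Coverage of $P$.} Every vertex of $P$ lies on the subpath from $v_1$ to $b$. This subpath is the unique path from $v_1$ towards $b$ that stays inside $P$. But $b$ need not be in $S$, so we need the covering certificates of $b$ in $S'$.
\item \textbf{Paths starting in $P$.} Consider a shortest path $Q$ in $S'$ with an endpoint $v_k\in S'_P$. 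If $v_k$ is its start, then $Q$ begins $v_k\cdots v_l b\cdots w$. Prepending $v_1\cdots v_{k-1}$ to $Q$ yields a path $Q'$ from $v_1$ to $w$ that covers the same vertices and more. The point to verify is that $Q'$ is still shortest. Every path from $v_1$ must leave $P$ through $b$, since $v_1$'s only out-neighbour is $v_2$ and so on. So $d(v_1,w)=d(v_1,v_k)+d(v_k,w)$, and $Q'$ is shortest.
\item \textbf{Paths ending in $P$.} If $v_k$ is the end of $Q$, say $Q$ goes from $u$ to $v_k$, then $Q$ passes through $a$ or starts inside $P$. Here I replace $v_k$ by the later endpoint $v_1$ only partially. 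Instead, I would argue that the vertices of $Q$ are covered by some other pair. The vertices of $Q$ lying outside $P$ end at $a$, so they are covered if $a$'s coverage extends. This is the delicate case.
\end{enumerate}

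The main obstacle is the last step: a vertex $x$ outside $P$ certified only by a shortest path $u\leadsto a\to v_1\to\cdots\to v_k$ ending in $P$. My first attempt is to replace $v_k$ by keeping $v_1$, using the prefix $u\leadsto a\, v_1$, which is still shortest. That prefix covers everything on $Q$ up to $v_1$, and the part of $Q$ inside $P$ is already covered by step 2. So I keep $v_1$ rather than dropping all of $S'_P$. This is why the conclusion is $V_P\cap S\subseteq\{v_1\}$ rather than $\emptyset$.

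For step 2, coverage of $v_2,\dots,v_l$ needs a shortest path from $v_1$ to some vertex of $S$ through $b$. I would get it from the original $S'$ as follows.
\begin{itemize}
\item If the original chosen $v_t$ was a start of some path, step 3 gives an extension.
\item If some $v_k\in S'_P$ is never the start of a path, then the vertices after it were covered by paths through $P$ from outside. Those paths pass through $a\,v_1\cdots v_l\,b$, so they cover all of $P$ anyway, and $v_1$ is not even needed.
\end{itemize}

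Combining these cases, $S$ is a geodetic set with $|S|\le|S'|$ and $V_P\cap S\subseteq\{v_1\}$, as claimed.
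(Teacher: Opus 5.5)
\textbf{There is a genuine gap, and the statement itself is false.} Your route matches the paper's: swap $S'\cap V_P$ for $v_1$, prepend $v_1\cdots v_{k-1}$ to shortest paths that start at a chosen $v_k$, and cut shortest paths that end at a chosen $v_k$ off at $v_1$.

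The gap is Step~2, covering $v_2,\dots,v_l$ after the swap. Your second bullet says that if a chosen $v_k$ starts no path, the vertices after it are covered by paths entering $P$ through $a$. This fails in two ways. First, those vertices may instead be covered by a path that starts at an earlier chosen $v_j$. Second, if $k=l$ there are no vertices after $v_k$, so the bullet gives nothing, and no vertex of $S$ need be reachable from $v_1$ at all. Your earlier idea of using a certificate of $b$ is the right tool when it applies. Suppose $b\in S'$, or $b$ lies on a shortest path ending at some $y\in S'\setminus V_P$. Every path out of $v_1$ runs through all of $P$ and then $b$. Hence $v_1\cdots v_l\,b$ followed by the $b$--$y$ segment is a shortest path, and $V_P\subseteq I(v_1,y)$. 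This breaks exactly when every such certificate winds around a directed cycle through $P$ and ends back in $V_P$. The same wrap-around also invalidates your prepending in Step~3 when $Q$ returns to an earlier chosen $v_j$.

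The gap cannot be closed. Take vertices $p,t_1,t_2,a,v_1,v_2,b,x$ and arcs $pt_1,t_1t_2,t_2a,pb,bx,xa,av_1,v_1v_2,v_2b$.
\begin{itemize}
\item The underlying graph is a theta graph with core vertices $a,b$ and no $2$-cycles, and $P=a\,v_1v_2\,b$ is a core dipath with $l=2$.
\item The only extremal vertex is the source $p$, and $\{p\}$ alone is not geodetic.
\item We have $d(p,a)=3$ along both $pt_1t_2a$ and $pbxa$, so $d(p,v_2)=5$. The two shortest $p$--$v_2$ paths cover all eight vertices, so $\{p,v_2\}$ is geodetic and the geodetic number is $2$.
\item For any other $c$, nothing reaches $p$, so $I(\{p,c\})=I(p,c)$. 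Having $v_2\in I(p,c)$ would force $d(p,c)\ge d(p,v_2)+1=6$, but every vertex is within distance $5$ of $p$.
\end{itemize}
Hence $\{p,v_2\}$ is the unique \mgs, and $V_P\cap S=\{v_2\}\not\subseteq\{v_1\}$. Your swap produces $\{p,v_1\}$, which leaves $v_2$ uncovered.

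The paper's own proof has the same hole. The shortest path covering $v_l$ may end at $w=v_l$ itself, which the swap deletes; here $w=v_2$. Any correct version must at least allow $v_l$ as the retained vertex and must handle the wrap-around case explicitly.
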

	
	\begin{proof} 	Recall that a core dipath is an oriented core path whose arcs are all oriented in the same direction. 
	    Let $S'$ be an \mgs of $D$. Then, $v_l$ is covered by some shortest path ending at a vertex $w\in S'$ (possibly itself). If $S'\cap V_P = \emptyset$, then $S'$ satisfies the required properties and we are done. Suppose then that $|S'\cap V_P| \geq 1$. Define $S = (S'\setminus V_P)\cup\{v_1\}$. The unique path between $v_1$ and $v_l$ is by definition a shortest path, and thus $V_P\subseteq I(v_1, w)$. Furthermore, any shortest path with an endpoint in $S'\cap V_P$ can be replaced by a shortest path where $v_1$ replaces the vertex of $S'\cap V_P$, while covering the same vertices outside of $P$. Thus, $S$ is a geodetic set of $D$, and $|S|\leq|S'|$ so $S$ is also an \mgs of $D$. 
	\end{proof}
	
	\Cref{claim-path-3extremals,claim-path-2extremals,claim-path-1extremal,claim-dipaths} imply that at most two non-extremal vertices can be part of an \mgs in any oriented core path of $D$. Define $V_C$ as the set of core vertices of $D$ and $V_I$ as the set of inner vertices of oriented core paths that can be chosen in \Cref{claim-path-2extremals,claim-path-1extremal,claim-dipaths}.
	
	\begin{lemma}\label{lemma-fen-gs}
    Let $D$ be a digraph, with $S^0$ its set of extremal vertices. There exists an \mgs $S$ of $D$ such that $S\subseteq V_C\cup V_I\cup S^0$, and $|S\setminus S^0|\leq 8\fen(D)-8$.
	\end{lemma}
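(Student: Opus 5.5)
We start from an arbitrary minimum geodetic set and normalise it one region at a time. By \Cref{lem:sink-sources}, every extremal vertex lies in every geodetic set, so $S^0\subseteq S$ for every \mgs $S$. Fix an \mgs $S_0$.

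First we handle the hanging ditrees. By \Cref{claim-trees-covered}, $S_1=(S_0\cap V(D_b))\cup S^0$ is again a geodetic set. Since $S^0\subseteq S_0$, we have $S_1\subseteq S_0$, so $S_1$ is an \mgs. Every non-extremal vertex of $S_1$ now lies in the base digraph. The base digraph is partitioned into the core vertices $V_C$ and the inner vertices of the oriented core paths.

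Next we handle the core paths one at a time. For each oriented core path $P$ we split according to $|S^0\cap V_P|$ (at least $3$, exactly $2$, exactly $1$, or $0$, in which case $P$ is a core dipath). We apply the matching claim among \Cref{claim-path-3extremals,claim-path-2extremals,claim-path-1extremal,claim-dipaths}. The resulting set keeps $S^0$ and agrees with the current set outside $V_P$, except that it may add a core vertex (the $v^\leftarrow$/$v^\rightarrow$ replacement), which lies in $V_C$ anyway. On $V_P$, the non-extremal chosen vertices are then contained in $V_I$.

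The main obstacle is that the claims are stated for ``some \mgs'' rather than as local transformations of a given one. To process the paths sequentially without undoing earlier ones, I would read their proofs as exchange arguments. Each proof starts from an arbitrary \mgs $S'$ and only modifies $S'\cap V_P$, possibly adding an endpoint core vertex. So if we apply them in turn, the choices already made on other core paths and on $V_C$ are preserved, and after processing all core paths we obtain an \mgs $S\subseteq V_C\cup V_I\cup S^0$. I would state this strengthened form explicitly, checking in \Cref{claim-path-2extremals} in particular that the choice of $v_{i+1}$ does not affect other paths. This holds because the inner vertices of $P$ only reach extremal vertices through $P$'s endpoints.

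Finally we count. Each oriented core path contributes at most two non-extremal vertices, by the remark following \Cref{claim-dipaths}. By \Cref{obs-fen-bounds} there are at most $3\fen-3$ core paths and at most $2\fen-2$ core vertices. The crude bound this gives is $2(3\fen-3)+(2\fen-2)=8\fen-8$, which is exactly the stated bound, so $|S\setminus S^0|\le 8\fen(D)-8$.
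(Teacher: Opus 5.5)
Your proposal is correct and follows the paper's proof. You normalise the hanging ditrees via \Cref{claim-trees-covered}, normalise each oriented core path via \Cref{claim-path-3extremals,claim-path-2extremals,claim-path-1extremal,claim-dipaths}, and count $2(3\fen-3)+(2\fen-2)$ using \Cref{obs-fen-bounds}. The differences are only that you treat the hanging ditrees first rather than last (harmless, since the path exchanges only touch $V_P$ and endpoint core vertices) and that you spell out the sequential-exchange reading of the claims, which the paper compresses into ``we can suppose''.
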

	
	\begin{proof}
	 By \Cref{claim-path-3extremals,claim-path-2extremals,claim-path-1extremal,claim-dipaths}, we can suppose there exists a \mgs $S'$ of $D$ whose vertices belonging to the base digraph of $D$ are in $V_C\cup V_I$. We can then apply \Cref{claim-trees-covered} on $S'$ to obtain the \mgs $S = (S'\cap V(D_b))\cup S^0$ where $D_b$ is the base digraph of $D$. Note that applying \Cref{claim-trees-covered} can only decrease the number of vertices in the obtained geodetic set. It follows that $S\subseteq V_C\cup V_I\cup S^0$. Let $P^c$ the set of core paths of $D_b$. Again by \Cref{claim-path-3extremals,claim-path-2extremals,claim-path-1extremal,claim-dipaths}, we have $|V_C\cup V_I| \leq |V_C| + 2|P^c|$, and by \Cref{obs-fen-bounds}, we obtain $|S\setminus S^0|\leq 8\fen(D)-8$.
	\end{proof}
	
	We can now describe the algorithm solving \gsfull parameterized by the feedback edge set number of the underlying undirected graph of the input digraph. This algorithm guesses the vertices to add in a solution among vertices in $V_C\cup V_I$. The algorithm can be stated as follows:
	
	\begin{itemize}
	    \item For all subsets $S^1$ of $V_C\cup V_I$, if $S=S^0\cup S^1$ is a geodetic set of $D$, mark $S$
	    \item Return the marked set $S$ of minimum size.
	\end{itemize}
	
	\begin{proof}[Proof of \Cref{thm-fenalgo}]
	The correctness of the algorithm is clear from \Cref{lemma-fen-gs}. The running time follows, since by \Cref{lemma-fen-gs}, we have $|V_C\cup V_I|\leq 8\fen(D)-8$. Thus, the algorithm checks $2^{\mathcal O(\fen(D))}$ different vertex sets. Each check is polynomial-time in $n$ since \gsfull belongs to \NP.
	\end{proof}

    \section{\texorpdfstring{\NP}{NP}-hardness on restricted DAGs}
    \label{sec:hard}
    
    In this section, we sketch the proof of Theorem~\ref{thm:fvn-hardness}. 
    We present a reduction from the classic \NP-complete problem \TDM~\cite{DBLP:books/fm/GareyJ79}.

    \problemdec{\TDM}{A ground set $U$ partitioned in three sets $X^\alpha$, $X^\beta$ and $X^\gamma$ such that $|X^\alpha| = |X^\beta| = |X^\gamma| = n$ and a collection of 3D edges $E\subseteq X^\alpha\times X^\beta\times X^\gamma$.}{Decide if there exists a set $S$ of $n$ edges of $E$ so that any element of $U$ is covered by an edge of $S$. The set $S$ is called a 3D-matching of $(U, E)$.}

    \paragraph{Reduction}
    Consider an instance of \TDM with  a ground set $U$ and its partition into three sets $X^\alpha$, $X^\beta$ and $X^\gamma$, and an edge set $E$. 
    In the following, the notation $\delta$ will designate any of the letters $\alpha$, $\beta$ or $\gamma$. Number the vertices of $X^\delta$ so that $X^\delta = \{x_1^\delta, \dots, x_n^\delta\}$ and denote by $m$ the cardinality of $E$. We construct a digraph $D$ as follows:
        
        \begin{description}
            \item[Edge vertices.] Add $n$ sets $M_1, \dots, M_n$ of $m$ vertices, each vertex corresponding to some edge of $E$. Denote by $u_i^e$ the vertex in $M_i$ associated with edge $e$. Add also for $1\leq i\leq n$ a vertex $d_i$ and an arc from any vertex in $M_i$ to $d_i$.
            \item[Ensuring edge vertices are covered.] Add three vertices $a$, $b$ and $c$ and connect them as follows:
            \begin{itemize}
                \item Add arcs from $a$ to each vertex in $\bigcup_{1\leq i\leq n}M_i$.
                \item Add arcs from each vertex in $\bigcup_{1\leq i\leq n}M_i$ to $b$.
                \item Add arcs from each vertex in $\{d_i\mid 1\leq i\leq n\}$ to $c$.
                \item Add an arc from $a$ to $c$.
            \end{itemize}
            \item[Element vertices.] Add vertices $v_i^\delta$, $w_i^\delta$ and $t_i^\delta$ associated with the element $x_i^\delta$ in $X^\delta$. Add arcs from $w_i^\delta$ to $v_i^\delta$ and from $t_i^\delta$ to $v_i^\delta$.
            \item[Encoding adjacency.] Add nine vertices $\alpha_1$, $\alpha_2$, $\alpha_3$, $\beta_1$, $\beta_2$, $\beta_3$, $\gamma_1$, $\gamma_2$, $\gamma_3$ and add an outgoing pendant vertex to each of them. For $i\in\{1,2,3\}$, denote by $\delta'_i$ the pendant vertex associated with $\delta_i$. Denote by $e$ some edge of $E$ such that $e=(x_i^\alpha, x_j^\beta, x_k^\gamma)$ and call $u_e$ any of its associated vertices in $\bigcup_{1\leq i\leq n}M_i$. Define $\lambda = n+1$. This value is used to define some additional paths, that will encode the adjacency of 3D edges with respect to elements of $U$. More precisely, we add the following paths:
            \begin{itemize}
                \item Add a path of length $\lambda^2-i\lambda$ from $u_e$ to $\alpha_1$, a path of length $\lambda^2$ from $u_e$ to $\alpha_2$, and a path of length $\lambda^2+i\lambda$ from $u_e$ to $\alpha_3$;
                \item Add a path of length $\lambda^2-j\lambda$ from $u_e$ to $\beta_1$, a path of length $\lambda^2$ from $u_e$ to $\beta_2$, and a path of length $\lambda^2+j\lambda$ from $u_e$ to $\beta_3$
                \item Add a path of length $\lambda^2-k\lambda$ from $u_e$ to $\gamma_1$, a path of length $\lambda^2$ from $u_e$ to $\gamma_2$, and a path of length $\lambda^2+k\lambda$ from $u_e$ to $\gamma_3$;
                \item For any $\delta\in\{\alpha,\beta,\gamma\}$, add a path of length $\lambda^2+l\lambda$ from $\delta_1$ to $w_l^\delta$, a path of length $\lambda^2$ from $\delta_2$ to $w_l^\delta$, a path of length $\lambda^2$ from $\delta_2$ to $t_l^\delta$ and a path of length $\lambda^2-l\lambda$ from $\delta_3$ to $w_l^\delta$.
            \end{itemize}
            \item[Ensuring the edge paths are covered.] For each path from an edge vertex of $\bigcup_{1\leq i\leq n}M_i$ to a vertex $\delta_i$ for $i\in\{1,2,3\}$ a pendant outgoing vertex to the vertex right before $\delta_i$ in said path.
            \item[Shortcutting the vertex $\textbf{a}$.] Add all arcs from $a$ to vertices $v_i^\delta$ associated with elements of $U$.
        \end{description}
        
        \Cref{fig-whole-reduc,fig-detail-reduc} 
        illustrate the main components of the obtained graph, namely the edge vertices, the element vertices and the encoding of adjacencies described above. Note that the choice of $\lambda$ ensures that the construction is valid. In particular, any path added during the \enquote{Encoding adjacency} step described above has positive length, since $\lambda^2 - i\lambda \geq n$ for any $i\leq n$.
        We argue that there exists a solution to \TDM for the original instance if and only if there exists a geodetic set of size $9nm+4n+12$ in $D$.
    \newcounter{iSet}

\NewDocumentCommand{\drawISet}{m m m O{}}{%
    \stepcounter{iSet}
    \begin{scope}[every node/.style={draw, circle}, #4]
        \node (n\theiSet_1) at ($(0,1) + (#1,#2)$) {};
        \node (n\theiSet_2) at ($(0,0) + (#1,#2)$) {};
        \node (n\theiSet_3) at ($(0,-1) + (#1,#2)$) {};
        
        \node[shape=ellipse, fit=(n\theiSet_1) (n\theiSet_2) (n\theiSet_3)] (s\theiSet) [label=below:#3]{};
    \end{scope}
}

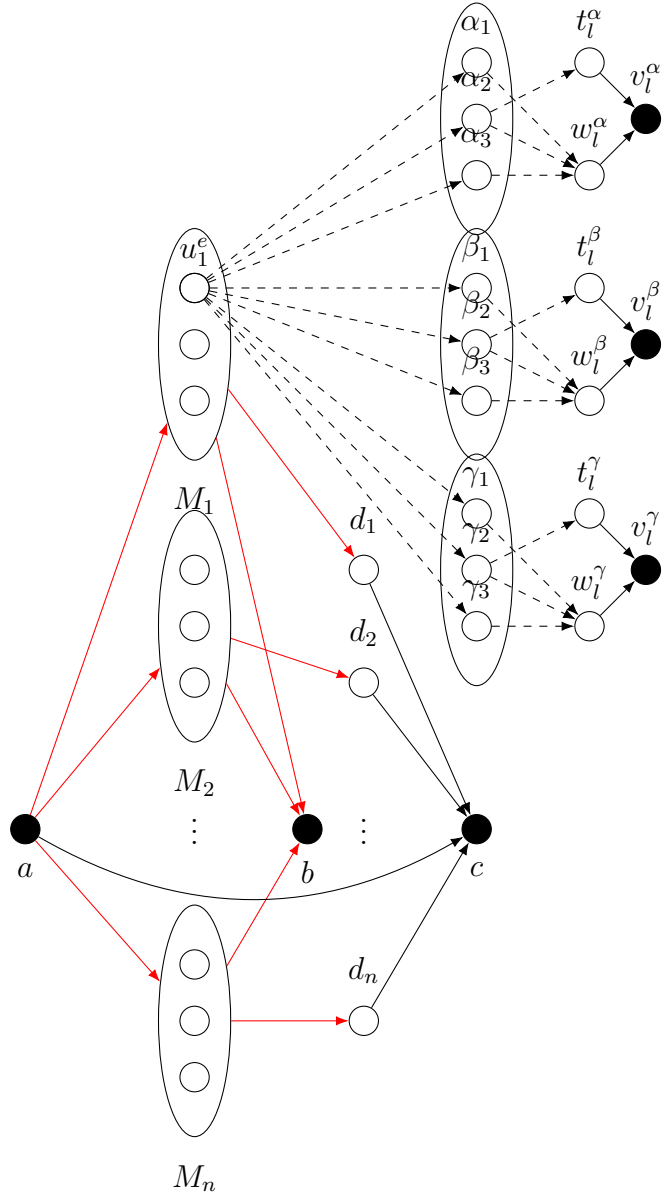
\begin{figure}
    \centering
    \begin{tikzpicture}[scale=.75, >=Latex]
    
        \drawISet{0}{6}{$M_1$}
        \drawISet{0}{1}{$M_2$}
        \drawISet{0}{-6}{$M_n$}
        \node at (0,-2.5) {\vdots};
        \node at (3,-2.5) {\vdots};
        
        \begin{scope}[every node/.style={draw, circle}]
        \node[label=below:$a$, fill] (a) at (-3, -2.6) {};
        \node[label=below:$b$, fill] (b) at ( 2, -2.6) {};
        \node[label=below:$c$, fill] (c) at ( 5, -2.6) {};
        \foreach \i/\n/\y in {1/1/2,2/2/0,3/n/-6} {
            \node[label=$d_\n$] (d\i) at (3,\y) {}; 
            
        };
        \foreach \i in {1,2,3} {
            \begin{scope}[every path/.style={draw, ->, red}]
                \path (a) -- (s\i);
                \path (s\i) -- (d\i);
                \path (s\i) -- (b);
                \path[black] (d\i) -- (c);
            \end{scope}
        };
        
        \draw[->] (a) to[bend right] (c); 
        
        \node[label={[label distance=-.2cm]:$u_1^e$}] (u) at (n1_1) {};
        \end{scope}
        
        \foreach \i/\y/\letter in {alpha/10/\alpha, beta/6/\beta, gamma/2/\gamma} {
            \begin{scope}[every node/.style={draw, circle}, label distance=-0.15cm]
                \node[label=$\letter_1$] (\i_1) at ($(0,1) + (5,\y)$) {};
                \node[label=$\letter_2$] (\i_2) at ($(0,0) + (5,\y)$) {};
                \node[label=$\letter_3$] (\i_3) at ($(0,-1) + (5,\y)$) {};
                
                \node[shape=ellipse, fit=(\i_1) (\i_2) (\i_3)] (s\i) {};
                \node[label=$t_l^\letter$] (t_\i) at ($(0, 1) + (7, \y)$) {};
                \node[label=$w_l^\letter$] (w_\i) at ($(0,-1) + (7, \y)$) {};
                \node[label=$v_l^\letter$, fill] (v_\i) at ($(1, 0) + (7, \y)$) {};
            \end{scope}
            \begin{scope}[every path/.style={draw, ->}]
                \foreach \j in {1, 2, 3} {
                    \path[dashed] (\i_\j) -- (w_\i);
                };
            \path[dashed] (\i_2) -- (t_\i);
            \path (t_\i) -- (v_\i);
            \path (w_\i) -- (v_\i);
            \end{scope}
        };
        
        \foreach \i in {alpha,beta,gamma} {
            \foreach \j in {1,2,3} {
                \begin{scope}[every path/.style={draw, ->, dashed}]
                    \path (u) to (\i_\j);
                \end{scope}
            };
        };
    \end{tikzpicture}
    \caption{A partial representation of the digraph $D$ constructed during the reduction. Dashed arcs represent paths of length greater than one. Red arcs between a vertex and a vertex set mean that there exists such an arc for all vertices in the vertex set. Filled vertices are extremal vertices of $D$, that belong to any geodetic set. Arcs adjacent to vertices $\delta_i$ are represented for only one edge vertex and vertices associated with three different elements, each of them belonging to a different partition set of $U$. Pending vertices to $\delta_i$ and paths from edge vertices to $\delta_i$ are not represented.
    }
    \label{fig-whole-reduc}
\end{figure}

\begin{lemma}
            \label{lemma-fvs-hardness-forward}
            If \((X^\alpha, X^\beta, X^\gamma, E)\) is a \yes-instance of \TDM,
            then \(D, k\) is a \yes-instance of \gsfull.            
        \end{lemma}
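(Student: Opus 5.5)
The plan is to exhibit an explicit set of size $k=9nm+4n+12$: all the extremal vertices of $D$, which must be in every geodetic set by Lemma~\ref{lem:sink-sources}, plus one well-chosen edge vertex in each $M_i$. Let $\{e_1,\dots,e_n\}$ be a 3D-matching of $(U,E)$. Define
\[
S \;=\; \mathrm{Ext}(D)\cup\{u_i^{e_i}\mid 1\le i\le n\}.
\]
The extremal vertices of $D$ are the following:
\begin{itemize}
\item the source $a$;
\item the sinks $b$ and $c$;
\item the $3n$ sinks $v_l^\delta$;
\item the $9$ pendant sinks $\delta'_j$;
\item the $9nm$ pendant sinks attached just before $\delta_j$ on the $9$ paths leaving each of the $nm$ edge vertices.
\end{itemize}
This gives $9nm+3n+12$ extremal vertices, so $|S|=9nm+4n+12=k$. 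It remains to check that every vertex of $D$ lies in $I(S)$. Throughout I will use the fact that $D$ is a DAG with very few arcs leaving $a$ or entering the gadget paths, so shortest paths can be identified by listing all directed routes.

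\textbf{Vertices near $a$.} These are handled directly.
\begin{itemize}
\item Each $u\in M_i$ lies on $a\to u\to b$. This is a shortest path, since every $a$–$b$ path goes through some $M_i$.
\item Each $d_i$ lies on $u_i^{e_i}\to d_i\to c$. Edge vertices have no arc to $c$, so this path is shortest. (The route $a\to u\to d_i\to c$ is not shortest because of the arc $ac$; this is exactly why the chosen $u_i^{e_i}$ is needed.)
\item Let $u_e$ be any edge vertex and consider one of its paths to some $\delta_j$. Every inner vertex of this path has in-degree $1$ (apart from its own pendant arc going out). Hence the walk from $a$ through $u_e$ along this path to the pendant attached before $\delta_j$ is the unique directed path between those endpoints, and it covers all inner vertices.
\item The vertices $\delta_j$ themselves lie on the path from a chosen $u_i^{e_i}$ to $\delta'_j$.
\end{itemize}

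\textbf{Element side.} Take a chosen $u=u_i^{e_i}$ with $e_i=(x_p^\alpha,x_q^\beta,x_r^\gamma)$, and look at the $\alpha$ part (the others are symmetric). From $u$ one can only reach the rest of the element side through the $\alpha_j$, $\beta_j$, $\gamma_j$; vertex $a$ and the other edge vertices are not reachable. The candidate routes to $w_l^\alpha$ have the following lengths:
\begin{itemize}
\item via $\alpha_1$: $(\lambda^2-p\lambda)+(\lambda^2+l\lambda)$;
\item via $\alpha_2$: $2\lambda^2$;
\item via $\alpha_3$: $(\lambda^2+p\lambda)+(\lambda^2-l\lambda)$.
\end{itemize}
For $l=p$ all three equal $2\lambda^2$, and the route through $\alpha_2$ to $t_p^\alpha$ also has length $2\lambda^2$. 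Since $v_p^\alpha$ is entered only from $w_p^\alpha$, $t_p^\alpha$ and $a$, we get $d(u,v_p^\alpha)=2\lambda^2+1$. All four routes from $u$ to $v_p^\alpha$ are therefore shortest.

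Consequently the pair $(u,v_p^\alpha)$ covers:
\begin{itemize}
\item the paths $u\to\alpha_1,\alpha_2,\alpha_3$;
\item the paths $\alpha_1\to w_p^\alpha$, $\alpha_2\to w_p^\alpha$, $\alpha_3\to w_p^\alpha$ and $\alpha_2\to t_p^\alpha$;
\item $w_p^\alpha$ and $t_p^\alpha$.
\end{itemize}
Since the matching covers every element $x_l^\delta$ exactly once, every gadget path $\delta_j\to w_l^\delta$, every path $\delta_2\to t_l^\delta$, and every $w_l^\delta,t_l^\delta$ is covered by some chosen edge vertex. This is the step where the matching property is used.

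\textbf{Main obstacle.} The hard part is justifying the distance equalities rigorously, i.e.\ that no unexpected shorter route exists. I would argue this by noting three structural facts:
\begin{itemize}
\item every $\delta_j$ is entered only through the edge-to-$\delta_j$ paths;
\item every $w_l^\delta$ and $t_l^\delta$ is entered only through the gadget paths;
\item the arcs $av_l^\delta$ are irrelevant for paths starting at an edge vertex, because $a$ is a source.
\end{itemize}
The only possible routes are then exactly those listed above. With the lengths computed, every vertex is in $I(S)$, so $(D,k)$ is a \yes-instance.
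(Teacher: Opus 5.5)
Your proposal is correct and follows essentially the same route as the paper: take all $9nm+3n+12$ extremal vertices plus $u_i^{e_i}\in M_i$ for each matching edge $e_i$. You then cover $M_i$ by $(a,b)$, $d_i$ by $(u_i^{e_i},c)$, the edge-to-$\delta_j$ paths by $a$ and their pendants, and the element gadgets by noting that all four routes from $u_i^{e_i}$ to $v_l^\delta$ have length $2\lambda^2+1$ when $e_i$ contains $x_l^\delta$. Your explicit argument that no other routes exist (inner vertices of in-degree one, $a$ being a source) spells out what the paper leaves implicit, and covering $\delta_j$ via a chosen edge vertex rather than via $a$ is an inessential variation.
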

  \begin{proof}
        Suppose that there exists a set $S$ of $n$ edges that cover all elements of $U$. We define a set $S'$ of vertices of $D$ of size $9nm+4n+12$ and prove that it is a geodetic set in $D$. First, we add all extremal vertices (\emph{i.e.} vertices that have only incoming or outgoing arcs) since they belong to any geodetic set. In $D$, they correspond to vertices $a$, $b$, $c$, pendant vertices to $\delta_i$, pendant vertices to paths from edge vertices to $\delta_i$ for $i\in\{1,2,3\}$ and all element vertices $v_l^\delta$, which amounts to $9nm+3n+12$ vertices. Number the edges of $S$ so that $S = \{e_1, \dots, e_n\}$. For each edge $e_i$, we add the vertex of $M_i$ associated with edge $e_i$ in $S'$. We verify now that all vertices in $D$ that do not belong to $S'$ are located on a shortest path between two vertices of $S'$. First, notice that vertices $a$ and $b$ cover all edge vertices in $\bigcup_{1\leq i\leq n}M_i$. All inner vertices of paths between edge vertices of $\bigcup_{1\leq i\leq n}M_i$ and vertices $\delta_i$ for $i\in\{1,2,3\}$ are covered by $a$ and the pendent vertices associated to each of those paths. Vertices $\delta_i$ for $i\in\{1,2,3\}$ are covered by $a$ and the pendent vertices associated to vertices $\delta_i$. For $1\leq i\leq n$, because one vertex $u$ of $M_i$ is selected in $S'$, the vertex $d_i$ is covered by the path between $u$ and $c$. The only vertices not yet considered are vertices on paths between $\delta_i$ (excluded) and $w_l^\delta$ (included) for $1\leq l\leq n$, and vertices between $\delta_2$ (excluded) and $t_l^\delta$ (included). Consider one element $x^\delta_j$. Because $S$ is a solution of \TDM for the original instance, there exists an edge $e_i$ of $S$ covering $x_l^\delta$. By definition of $D$, this means that:
        \begin{itemize}
            \item the path between $u_i^{e_i}$ and $\delta_1$ has length $\lambda^2-l\lambda$;
            \item the path between $u_i^{e_i}$ and $\delta_3$ has length $\lambda^2+l\lambda$;
            \item the path between $\delta_1$ and $w_j^\delta$ has length $\lambda^2+l\lambda$;
            \item the path between $\delta_3$ and $v_l^\delta$ has length $\lambda^2-l\lambda$.
        \end{itemize}
        
        This means in particular that the paths from $u_i^{e_i}$ to $w_l^\delta$ respectively going through $\delta_1$, $\delta_2$ and $\delta_3$ are all of length $2\lambda^2$. Furthermore, the path between $u_i^{e_i}$ and $t_l^\delta$ going through $\delta_2$ is also of length $2\lambda^2$. This means that all paths between $u_i^{e_i}$ and $v_l^\delta$ have length $2\lambda^2+1$ and all vertices belonging to those paths are covered. This proves that $S'$ is a geodetic set of $D$.
        \end{proof}

            \begin{figure}
        \centering
        \begin{tikzpicture}[>=Latex, scale=.9]
            \begin{scope}[every node/.style={draw, circle}]
                \node[label=$u^e$] (u) at (-2,0) {};
            \end{scope}
            
            \foreach \i/\y/\letter/\k in {alpha/6.5/\alpha/l_1, beta/0/\beta/l_2, gamma/-6.5/\gamma/l_3} {
                \begin{scope}[every node/.style={draw, circle}, label distance=-0.15cm]
                    \node[label=$\letter_1$] (\i_1)   at ($(0,1.5) + (5,\y)$) {};
                    \node (p_\i_1) at ($(-1.5,1.5) + (5,\y)$) {};
                    \node[label=$\letter_2$] (\i_2) at ($(0,0) + (5,\y)$) {};
                    \node (p_\i_2) at ($(-1.5,0) + (5,\y)$) {};
                    \node[label=$\letter_3$] (\i_3) at ($(0,-1.5) + (5,\y)$) {};
                    \node (p_\i_3) at ($(-1.5,-1.5) + (5,\y)$) {};
                    
                    \node[shape=ellipse, fit=(\i_1) (\i_2) (\i_3)] (s\i) {};
                    \node[label=$t_{\k}^\letter$] (t_\i) at ($(0, 1.5) + (9, \y)$) {};
                    \node[label=below:$w_{\k}^\letter$] (w_\i) at ($(0,-1.5) + (9, \y)$) {};
                    \node[label=$v_{\k}^\letter$, fill] (v_\i) at ($(1, 0) + (9, \y)$) {};
                \end{scope}
                \begin{scope}[every path/.style={draw, ->}]
                    \foreach \j in {1, 2, 3} {
                        \node[above right=.6cm of p_\i_\j, fill, circle] (pend_\i_\j) {};
                        \node[above right=.6cm of \i_\j, fill, circle] (p\i\j) {};
                        \path[] (\i_\j) -- (p\i\j);
                        \path (p_\i_\j) -- (\i_\j);
                        \path (p_\i_\j) -- (pend_\i_\j);
                    };
                \path[dashed] (\i_1) -- (w_\i) node [pos=0.6, above, sloped] {$\lambda^2+\k\lambda$};
                \path[dashed] (\i_2) -- (w_\i) node [pos=0.5, above, sloped] {$\lambda^2$};
                \path[dashed] (\i_3) -- (w_\i) node [pos=0.4, above, sloped] {$\lambda^2-\k\lambda$};
                \path[dashed] (\i_2) -- (t_\i) node [midway, above, sloped] {$\lambda^2$};
                \path (t_\i) -- (v_\i);
                \path (w_\i) -- (v_\i);
                \path[dashed] (u) -- (p_\i_2) node [pos=.8, above, sloped] {$\lambda^2-1$};
                \end{scope}
            };
            
            \foreach \i/\k in {alpha/i,beta/j,gamma/k} {
            \draw[dashed, ->] (u) -- (p_\i_1) node [midway, above, sloped] {$\lambda^2-\k\lambda-1$};
            \draw[dashed, ->] (u) -- (p_\i_3) node [midway, below, sloped] {$\lambda^2+\k\lambda-1$};
                
            };
        \end{tikzpicture}
        \caption{Details of the gadget ensuring adjacency of edges and elements. One edge vertex $u^e$ is represented, as well as three gadgets that each correspond to an element of a set $X^\delta$. Dashed arcs represent paths, whose length is indicated on the drawing. Filled vertices are extremal vertices that belong to any geodetic set of the digraph.}
        \label{fig-detail-reduc}
    \end{figure}
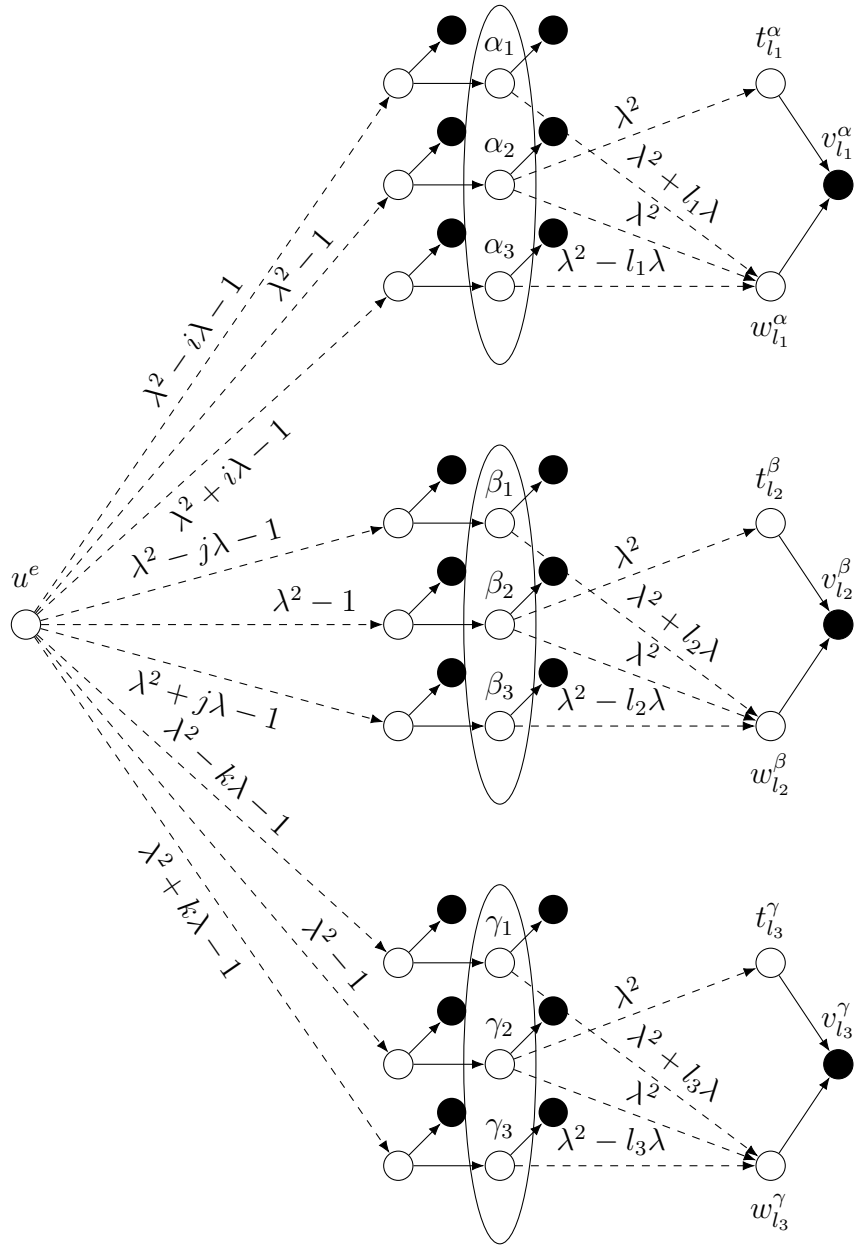
    
        \begin{lemma}
        \label{lemma-fvs-hardness-backward}
            If $(D, k)$ is a \yes-instance of \gsfull then
            $(X^\alpha, X^\beta, X^\gamma, E)$ is a \yes-instance of \TDM.
        \end{lemma}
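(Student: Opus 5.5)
The plan is a counting argument on the budget followed by a distance computation in the element gadgets. Let $S$ be a geodetic set of $D$ of size $k=9nm+4n+12$. By \Cref{lem:sink-sources}, $S$ contains all extremal vertices of $D$. As in the proof of \Cref{lemma-fvs-hardness-forward}, there are $9nm+3n+12$ of them: $a,b,c$, the pendant vertices, and the sinks $v_l^\delta$. Hence $S$ contains exactly $n$ non-extremal vertices, and the goal is to locate them.

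\emph{Step 1 (the $d_i$ force one choice per $M_i$).} Fix $i$. The vertex $d_i$ is not extremal. Its only in-neighbours are the vertices of $M_i$, and the only vertex reaching $M_i$ from outside is $a$. A dipath from $a$ through $d_i$ must end at $c$, the unique out-neighbour of $d_i$, because $c$ is a sink. Such a path $a\to u\to d_i\to c$ has length $3$, while the arc $ac$ has length $1$, so it is not shortest. Therefore $d_i$ is covered only if $S\cap(M_i\cup\{d_i\})\neq\emptyset$. These $n$ sets are pairwise disjoint, so $S$ contains exactly one non-extremal vertex in each $M_i\cup\{d_i\}$ and no other non-extremal vertex. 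In particular, no inner vertex of any gadget path belongs to $S$.

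\emph{Step 2 (covering $t_l^\delta$ needs an edge vertex).} Consider $t_l^\delta$. The only sources of $D$ are $a$ and possibly path starts that are already excluded; I will check from the construction that every non-$a$ vertex has an in-neighbour. The shortest path from $a$ to $v_l^\delta$ is the single arc, so $a$ covers nothing on the gadget paths toward $v_l^\delta$. Any shortest path covering $t_l^\delta$ therefore starts at a non-extremal vertex of $S$ that can reach $t_l^\delta$. By Step 1 this is some $u\in M_i$, since $d_i$ reaches only $c$. Every vertex reachable from $t_l^\delta$ lies in $\{t_l^\delta,v_l^\delta\}$, so the path ends at $v_l^\delta$. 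So for each element $x_l^\delta$ there is some chosen $u_i^{e}\in S\cap M_i$ for which the route $u_i^e\to\delta_2\to t_l^\delta\to v_l^\delta$, of length $2\lambda^2+1$, is a shortest $u_i^e$–$v_l^\delta$ path.

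\emph{Step 3 (distance computation, the main obstacle).} Let $e$ have coordinate index $j$ in $X^\delta$. The competing routes through $w_l^\delta$ have the following lengths:
\[
\text{via }\delta_1:\ 2\lambda^2+(l-j)\lambda+1,\qquad \text{via }\delta_3:\ 2\lambda^2+(j-l)\lambda+1 .
\]
Shortestness of the $\delta_2$ route forces $l\ge j$ and $j\ge l$, so $j=l$; that is, $e$ contains $x_l^\delta$. The delicate part is certifying that these are the only $u_i^e$–$v_l^\delta$ routes. I will argue that from an edge vertex the only exits are to $d_i$ and $b$ (dead ends toward $v_l^\delta$) and along its nine gadget paths. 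From $\delta_1,\delta_2,\delta_3$ of the same letter, the paths go only to $w^\delta_\cdot$ and $t^\delta_\cdot$, besides pendants. Paths into $\delta_\cdot$ from other edges cannot be entered midway except from their own edge vertex. Hence the only routes are the three $w$-routes and the one $t$-route listed above.

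\emph{Conclusion.} Let $F$ be the multiset of the $n$ edges whose vertices are chosen in the $M_i$. By Steps 2--3, every one of the $3n$ elements lies in some edge of $F$. Since $|F|\le n$ and each edge contains exactly three elements, the edges of $F$ are distinct and pairwise disjoint, so $F$ is a 3D-matching.
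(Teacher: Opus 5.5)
Your proof is correct and follows the paper's argument. It forces one non-extremal vertex per $M_i\cup\{d_i\}$ via $d_i$, shows $t_l^\delta$ can only be covered by a chosen edge vertex toward $v_l^\delta$, and compares the $\delta_1$, $\delta_2$, $\delta_3$ route lengths to get $j=l$. Handling a possibly chosen $d_i$ by the final counting, instead of the paper's replacement-and-padding step, is slightly cleaner. In Step 3 you only need the competing routes to exist as upper bounds, together with uniqueness of the route through $t_l^\delta$; you do not need to show they are the only routes.
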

        \begin{proof}
        Suppose that there exists a geodetic set $S'$ of size $9nm+4n+12$ in $D$. We define then a set $S$ of edges of $E$ and argue this is a 3D matching of the original instance. As said previously, all extremal vertices of $D$ belong to $S'$. This means that there are $n$ vertices of $S'$ that are yet to be identified. Consider vertex $d_i$ for $1\leq i\leq n$ and suppose it is not in $S'$. The only reachable vertex from $d_i$ is $c$, thus $d_i$ must be covered by a shortest path ending in $c$ and starting at some vertex that can reach $d_i$. Such vertices are exactly vertices of $M_i\cup \{a\}$, but since the arc $(a,c)$ is present in $D$, one vertex of $M_i$ must belong to $S'$. This means that for all $1\leq i\leq n$, one vertex of $M_i\cup\{d_i\}$ belongs to $S'$ and because of the cardinality of $S'$, there is exactly one vertex of $M_i\cup\{d_i\}$ in $S'$. One can further suppose that this vertex is in $M_i$, since all vertices that $d_i$ could cover (\emph{i.e.} vertices that can reach $d_i$ or are reachable by it) are already covered by extremal vertices or belong to $S'$ as seen previously. Each vertex of $S' \cap \bigcup_{1\leq i\leq n}M_i$ is associated to one edge of $E$. Define $S^*$ as the set of all such edges. By definition, $|S^*|\leq n$. If the inequality is strict, add arbitrarily $n-|S^*|$ edges to $S^*$ and call $S$ this augmented edge set of cardinality $n$. Consider $x_l^\delta$ an element of $U$ and $t_l^\delta$ its associated vertex in $D$. Since $S'$ is a geodetic set, and $t_l^\delta$ does not belong to $S'$, it is located on a shortest path from a vertex of $S'$ and $v_l^\delta$ (the only reachable vertex from $t_l^\delta$). Among the extremal vertices of $D$, only $a$ can reach $t_l^\delta$, and since the arc $(a, v_l^\delta)$ is present in $D$, no path between $a$ and $v_l^\delta$ going through $t_l^\delta$ is a shortest path. This means that $t_l^\delta$ must be covered by a path between the edge vertex $u^e$ of $S'$ and $v_l^\delta$. Without loss of generality, suppose that $\delta=\alpha$ (the reasoning is similar for $\beta$ and $\gamma$) and suppose that the edge $e$ corresponding to vertex $u^e$ contains the element $x_i^\alpha$ of $X^\alpha$. Note that $e\in S^*$. By definition of $D$, four different paths between $u^e$ and $v_l^\alpha$ are present in $D$:
        \begin{itemize}
            \item One going through $\alpha_1$ of length $2\lambda^2+\lambda(l-i)+1$;
            \item One going through $\alpha_3$ of length $2\lambda^2 + \lambda(i-l)+1$;
            \item One going through $\alpha_2$ and $w_l^\alpha$ and one going through $\alpha_2$ and $t_l^\alpha$, both of length $2\lambda^2+1$.
        \end{itemize}
        In order for $t_l^\alpha$ to be covered, the path between $u^e$ and $v_l^\alpha$ going through it has to be a shortest path. This means that $2\lambda^2+1\leq \lambda^2+\lambda(l-i)+1$ and $2\lambda^2+1\leq \lambda^2+\lambda(i-l)+1$, which ensures that $i=l$. In turn, this means that $e$ covers element $x_l^\alpha$. Since this reasoning can be applied for any element of $U$, we proved that $S^*$ is a 3D matching of $U$, and that $S$ is a 3D matching of $U$ of size $n$.
        \end{proof}

\begin{proof}[Proof of Theorem~\ref{thm:fvn-hardness}]
First, note that \gsfull is known to be \NP-complete on general digraphs, thus it belongs to \NP. 
Also, note that deleting vertices $\alpha_1$, $\alpha_2$, $\alpha_3$, $\beta_1$, $\beta_2$, $\beta_3$, $\gamma_1$, $\gamma_2$, $\gamma_3$, $a$, $b$ and $c$ 
removes any cycle in the underlying graph of $D$, thus its feedback vertex number is upper-bounded by $12$. Furthermore, removing those vertices yields a disjoint union of subdivided stars as the remaining graph. It is not difficult to verify that the pathwidth of a disjoint union of graphs is the maximum of the pathwidth of its components, and that the pathwidth of any subdivided star is 2 (each bag contains the central vertex of the star; each other bag contains the two endpoints of one of the edges, and the bags are ordered naturally following the orders of the paths representing the branches of the the star). Consider then a path decomposition $\mathcal D$ of width 2 of this disjoint union of subdivided stars, and create a path decomposition of the original graph by adding the twelve above vertices in all the bags of the decomposition $\mathcal D$. This is a valid path decomposition of width 14 for the underlying graph of $D$, so the pathwidth of the underlying graph of $D$ is at most 14.
The digraph $D$ is acyclic and the reduction can be completed in time polynomial in the input instance.
The correctness of the reduction follows from \Cref{lemma-fvs-hardness-forward,lemma-fvs-hardness-backward}.
This, along with the fact that \TDM is \NP-complete~\cite{DBLP:books/fm/GareyJ79}, concludes the proof of Theorem~\ref{thm:fvn-hardness}.
\end{proof}

\section{Conclusion}\label{sec:conclu}
In this article, we continued the study of \gsfull for digraphs.
As directions for further research, it would be interesting to 
extend our algorithm for ditrees to more general classes of 
digraphs. For instance, what can be said about directed cactus 
graphs (dicactii)? 
An algorithm for directed cactus graphs without 
\(2\)-cycles (oriented cactii) was given in~\cite{AraujoA22}.
More broadly, one could consider directed outerplanar graphs.
Note that 
an algorithm for undirected outerplanar graphs appears 
in~\cite{mezzini2018}.  
Regarding parameterized complexity, it would be natural to ask 
whether a polynomial kernel exists with respect to the feedback 
edge set number of the underlying undirected graph. This remains 
open even for undirected graphs~\cite{KK22}. 

%
%
%
\bibliographystyle{splncs04}
\bibliography{references}

\end{document}